    \setlist[itemize]{noitemsep, topsep=0pt}
    \setlist[enumerate]{noitemsep, topsep=0pt}
\newcommand{\be}{\begin{equation}}
\newcommand{\ee}{\end{equation}}
\newcommand{\<}{\langle}
\renewcommand{\>}{\rangle}
\newcommand{\Tr}{{\rm Tr\,}}
\newcommand{\limsum}{\lim_{N,D\to\infty}\frac 1D\sum_{n=0}^{D-1}}
\newcommand{
{\color{red}\input{latex/edits/}}}[1]{
{\color{red}\input{latex/edits/#1}}}
\newtheorem{prop}{Proposition}
\newtheorem{lem}{Lemma}
\newtheorem{dfn}{Definition}
\newcommand{\ourtitle}{Observing and braiding topological Majorana modes\\ on programmable quantum simulators}
\newcommand{\almaden}{IBM Quantum, IBM Research -- Almaden, San Jose CA, 95120, USA}
\newcommand{\cambridge}{IBM Quantum, MIT-IBM Watson AI lab,  Cambridge MA, 02142, USA}
\newcommand{\yale}{Department of Physics, Yale University, New Haven CT, 06520, USA}
\newcommand{\google}{Google Quantum AI, Venice Beach, CA, 90291, USA}
\newcommand{\sectionone}{Section 1}
\newcommand{\sectiononeexpl}{SI Section 1}
\newcommand{\sectiononebig}{Section 1 of Supplementary Information (SI)}
\newcommand{\sectiontwo}{Section 2}
\newcommand{\sectiontwobig}{SI Section 2}
\newcommand{\sectionthree}{Section 3}
\newcommand{\sectionthreeexpl}{SI Section 3}
\newcommand{\sectionfour}{Section 4}
\newcommand{\sectionfourexpl}{SI Section 4}
\newcommand{\sectionfive}{Section 5}
\newcommand{\sectionfiveexpl}{SI Section 5}
\newcommand{\figsize}{0.5}
\newcommand{\results}{}
\newcommand{\introductionsec}{}
\newcommand{\discussion}{\textbf{Discussions and outlook.}}
\newcommand{\insupp}{in the Supplementary Information}
\newcommand{\seesupp}{(see discussion below Eq.~\eqref{eqs:wavefunctions_F})}
\begin{document}

\title{\ourtitle}

\author{Nikhil Harle}
\affiliation{\yale}
\affiliation{\cambridge}
\author{Oles Shtanko}
\affiliation{\almaden}
\author{Ramis Movassagh}
\affiliation{\cambridge}
\affiliation{\google}

\begin{abstract}
Electrons are indivisible elementary particles, yet paradoxically a collection of them can act as a fraction of a single electron, exhibiting exotic and useful properties. One such collective excitation, known as a topological Majorana mode, is naturally stable against perturbations, such as unwanted local noise, and can thereby robustly store quantum information. As such, 
Majorana modes serve as the basic primitive of topological quantum computing, providing resilience to errors. However, their demonstration on quantum hardware has remained elusive. Here, we demonstrate a verifiable identification and braiding of topological Majorana modes using a superconducting quantum processor as a quantum simulator. By simulating fermions on a one-dimensional lattice subject to a periodic drive, we confirm the existence of Majorana modes localized at the edges, and distinguish them from other trivial modes. To simulate a basic logical operation of topological quantum computing known as braiding, we propose a non-adiabatic technique, whose implementation reveals correct braiding statistics in our experiments. This work could further be used to study topological models of matter using circuit-based simulations, and shows that long-sought quantum phenomena can be realized by anyone in cloud-run quantum simulations, whereby accelerating fundamental discoveries in quantum science and technology.
\end{abstract}

\maketitle

\introductionsec

It is a unique time in the history of science and engineering when we are witnessing significant advances in the development of fully controllable, coherent many-body quantum systems that contain dozens to hundreds of qubits~\cite{chow2021ibm}. Quantum simulators hold the promise of exponentially outperforming classical computers, which would bring about a host of applications beyond the reach of classical computers. Perhaps the most promising  application of these systems is the simulation of quantum many-body systems~\cite{feynman1982simulating}, which includes topological phases of matter \cite{wen2017zoo,qi2011topological}. In addition to their exotic nature, topological quantum states are a promising route to fault-tolerant quantum computation that is based on non-Abelian excitations such as Majorana fermions \cite{aasen2016milestones}. Majorana fermions are exotic particles: each is its own antiparticle, unlike an electron being distinct from its antiparticle (positron). Despite the remarkable progress, the original proposal for the realization of Majorana-based quantum memories on solid state devices \cite{lutchyn2010majorana,oreg2010helical,beenakker2013search} ultimately encountered difficulties due to disorder and lack of control, as well as the inability to separate Majorana modes from other trivial zero-energy states~\cite{lee2014spin,kayyalha2020absence,valentini2021nontopological,yu2021non,saldana2021coulombic,wang2021spin}. At the same time, quantum simulators may help in this search with their unprecedented levels of parameter control for a range of topological models \cite{khemani2016phase,liu2013floquet,potter2016classification}.

Realization of topological phases hosting Majorana modes in bosonic multi-qubit devices was first envisioned few decades ago \cite{levitov2001quantum}, with subsequent theoretical developments \cite{you2014encoding,backens2017emulating}. 
Since then signatures of topological modes were detected in photonic experiments \cite{kitagawa2012observation,cheng2019observation,xiao2017observation} and programmable digital quantum information processors \cite{smith2019simulating,tan2019simulation,fauseweh2021digital,bassman2021simulating,koh2021stabilizing,smith2019crossing,neil2021accurately}. While these devices are limited to non-equilibrium settings 
they still are able to exhibit long-lived signatures of topological modes \cite{wikeckowski2018identification,shtanko2020unitary}. Some of these signatures were analyzed in programmable processors with methods usually tailored to free-fermionic models \cite{azses2020identification,choo2018measurement,zhang2022digital}. However, the qualitative study of the properties of these topological excitations remained a challenge.
Braiding of the Majorana fermions is yet another motivation as  it provides the exchange statistics of the topological excitations and is a necessary step for topological quantum computation. While there has been progress in manipulation of toy Majorana modes in photonics~\cite{xu2016simulating,xu2018photonic,liu2021topological} and superconducting architectures~\cite{wootton2017demonstrating,zhong2016emulating,song2018demonstration,huang2021emulating},  they were limited to a few qubit systems, not a real topological phase. Thus, direct probing of the topological modes and their manipulation remained an open problem.

Using existing noisy quantum hardware, we aim to perform quantitative simulations of topological quantum matter. We recreate the state of one-dimensional topological superconductor widely known for hosting a pair of exotic ``half-electron"  Majorana modes at its boundaries. We show how to use Fourier transformation of multi-qubit observables to reliably determine the structure of Majorana modes. We also demonstrate how the detection of two-point correlation functions make it possible to distinguish between trivial and topological modes. Finally, we introduce and implement the fast approximate swap (FAS): a general non-adiabatic method to approximately braid Majorana fermions in one dimension. Unlike conventional adiabatic methods, it allows implementation on
the current generation of noisy quantum hardware.\\

\results

\textbf{Floquet engineering}.
Time-periodic (Floquet) systems  had proven to be particularly suitable
for  simulations
on digital quantum processors. 
In particular, when system Hamiltonian alternates between two or more local Hamiltonians being sums of mutually commuting terms,
this choice of quantum dynamics provides a remarkable resource utilization. In this way, unlike trotterized continuous dynamics, a constant-time Floquet dynamics can be simulated using constant-depth circuit.
While Floquet systems may be compared in their form to rough trotterization of continuous dynamics, they exhibit a wide variety of topological phases \cite{rudner2020band}. The Floquet topological phase may be quite robust despite the presence of disorder~\cite{shtanko2018stability}.

Our focus is on the time-periodic Hamiltonian
\be\label{main_ham}
H(t) = \sum_{j=1}^{N-1}\Bigl(J(t)\,X_j X_{j+1}+\lambda(t)\, Z_jZ_{j+1}\Bigl)+h(t)\sum_{j=1}^{N} Z_j ,
\ee
where $X_j$ and $Z_j$ are single-qubit Pauli operators, $\{J,\lambda,h\}(t+T)=\{J,\lambda,h\}(t)$ is a set of time-periodic parameters, $T$ is the time period, $N$ is the number of qubits.

We propose a protocol that divides a single driving period into three parts.
For simplicity, we consider the driving period acting from $t=0$ to $t=T$.
During the first part, from the start of the period to time $\tau_1$, we set $h(t) = h$ and the other coefficients to zero, $J(t)=\lambda(t) = 0$.
Next, for times in between $\tau_1$ and $\tau_2$, we set $J(t) = J$ and the rest of the coefficients to zero.
Lastly, between $\tau_2$ and the end of the period $T$, we set the last term to be on, $\lambda(t) = \lambda$, and all other terms to zero.
Therefore, only one term in the Hamiltonian in Eq.~\eqref{main_ham} is active at any given moment.

A quantum circuit can reproduce such a quantum dynamics protocol at discrete times $t_n = nT$. At such times, the system's state is described by the wavefunction $|\psi_n\> = U_F^n|\psi_0\>$, here $|\psi_0\>$ is the initial state and $U_F=\exp(-i\int_0^t H(t') dt')$ is the Floquet unitary,
\be\label{eq:floquet_unitary}
\begin{split}
 \quad U_F = \prod_{j=1}^{N-1}e^{-i\varphi Z_jZ_{j+1}}&\prod_{j=1}^{N-1}e^{-i\theta X_jX_{j+1}}
 \prod_{j=1}^N e^{-i\phi Z_j},
 \end{split}
\ee
where the gate angles are $\phi = h\tau_1$, $\theta = J(\tau_2-\tau_1)$, and $\varphi = \lambda (T-\tau_2)$. The corresponding experimental protocol that involves local single- and two-qubit gates is depicted in Fig.~\ref{fig:fig1}(a), where each cycle corresponds to a single Floquet unitary. 

\begin{figure}[t!]
    \centering
    \includegraphics[width=\figsize\textwidth]{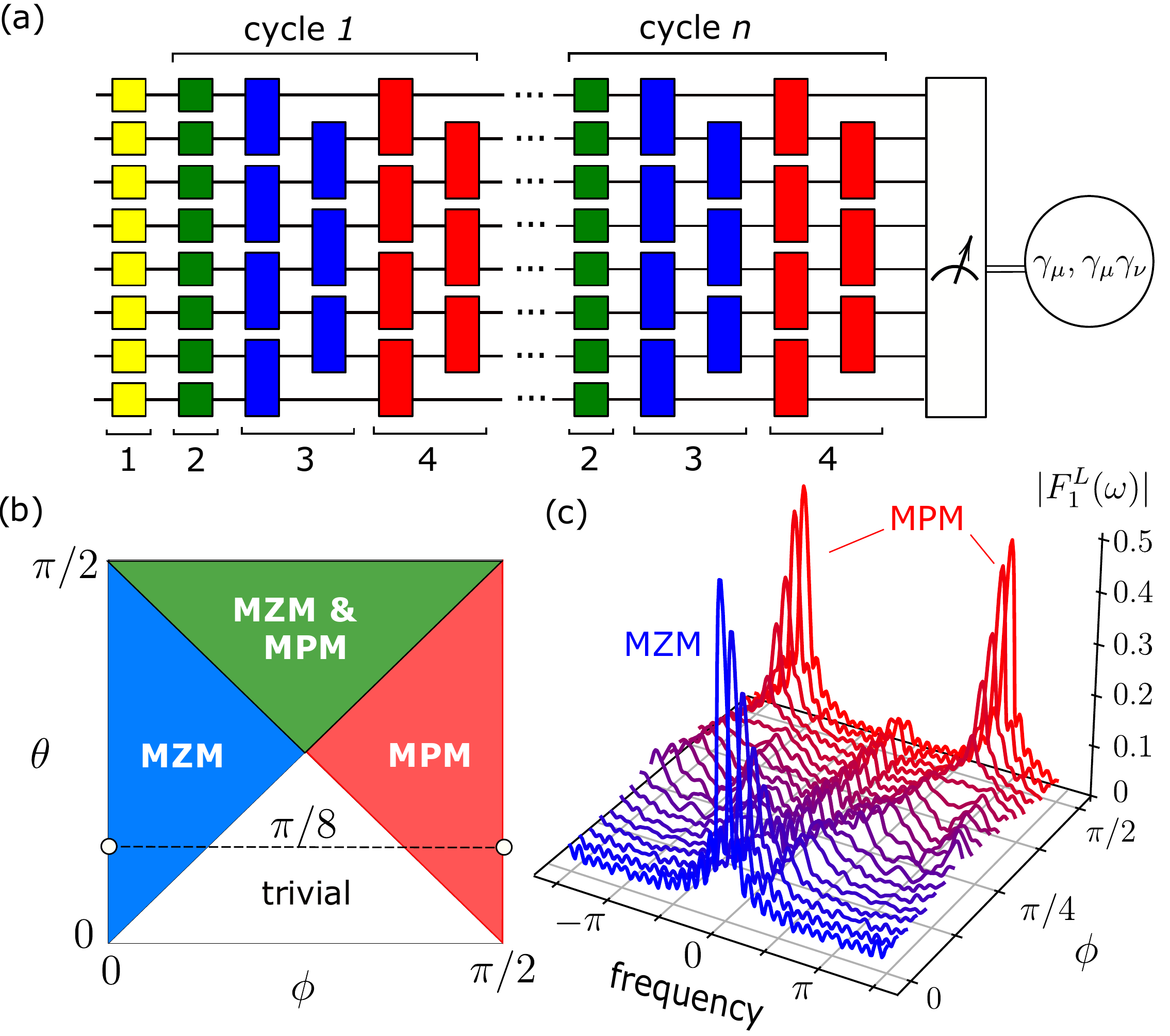}
    \caption{\textbf{Circuit and phase diagram.} (a) Schematics for an 8-qubit circuit including the initialization, evolution, and measurement parts. The initialization process involves single-qubit Hadamard gates (yellow, \#1). Evolution is composed of cycles consisting of $Z$-gates (green, \#2), $XX$-gates (blue, \#3), and $ZZ$-gates (red, \#4). Measurement provides the expectation of operators $\gamma_\mu^{L,R}$ or $\gamma_\mu\gamma_\nu$ (see \sectiononeexpl). (b) Phase diagram for $\lambda=0$, depicting four possible phases, see text. (c) Experimentally measured Fourier component $|F^L_1(\omega)|$ as a function of $\phi$ for fixed $\theta = \pi/8$ using a 21-qubit system, implemented on \textit{ibm\_hanoi}. The system exhibits transitions from MZM to trivial phase and from trivial to MPM phase. Detected peaks indicate the presence of Majorana modes at frequencies $\omega = 0$ and $\omega = \pi$. }
    \label{fig:fig1}
\end{figure}

The model has received considerable attention in the study of condensed matter systems due to its alternative description in terms of spinless fermions. By Jordan-Wigner transformation, the qubit Pauli operators can be transformed into nonlocal Majorana fermion operators $\gamma_\mu$ satisfying $\{\gamma_\mu,\gamma_\nu\} = 2\delta_{\mu\nu}$, where $\mu,\nu=1,\dots,2N$ \cite{suzuki2012quantum}. It is not a unique mapping; here we use two equivalent Jordan-Wigner representations, denoted as $\gamma^{L,R}_\mu$ and associated with the right and left boundaries. In these representations a Majorana operator becomes a string of Pauli operators connected to one of the boundaries. As we show in \sectiononebig, the expectation values of these operators can be obtained from single-qubit measurements preceded by a series of two-qubit gates.
 We will not include the superscripts for the Majorana operators when the choice of representation is not important. 

\begin{figure*}[t!]
    \centering
    \includegraphics[width=1\textwidth]{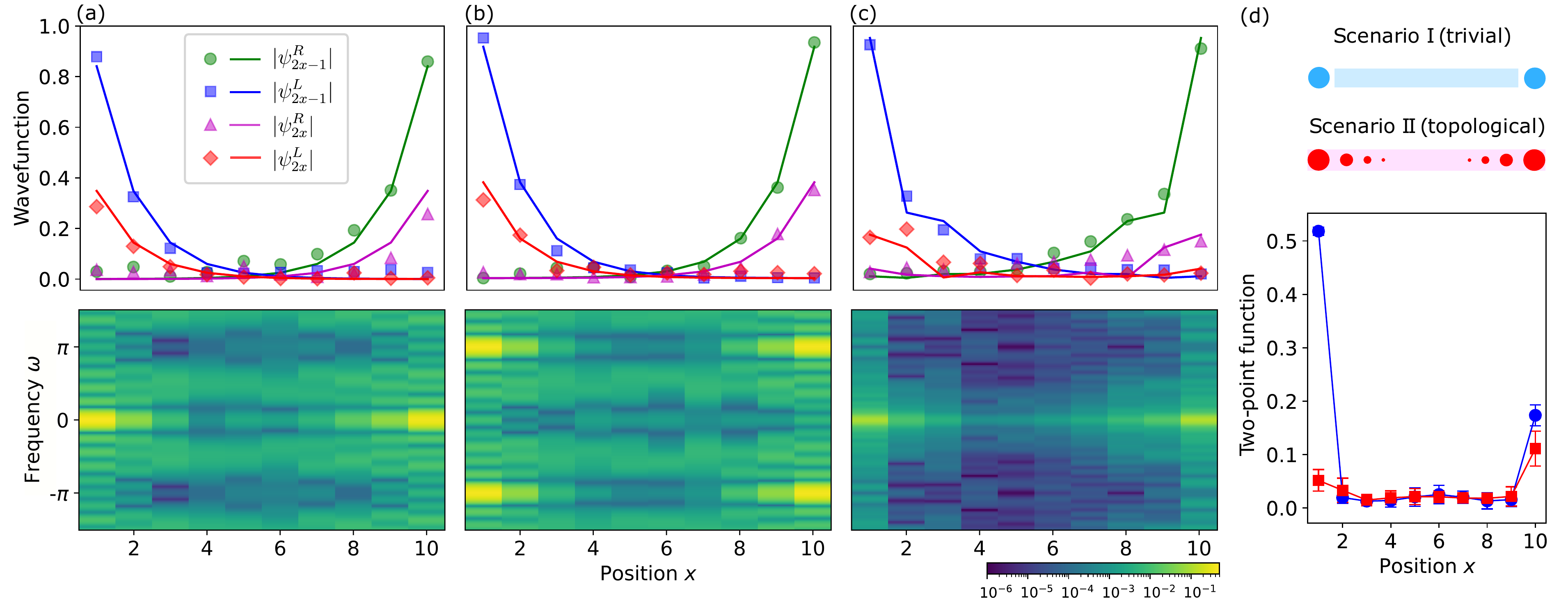}
    \caption{\textbf{Detection of Majorana modes}. 
    Panels (a)-(c) show the absolute value of the experimentally observed Majorana mode wavefunctions $|\psi^s_\mu|$ (dots) in comparison with its theoretical prediction (lines) for $N=10$ qubits. Wavefunctions are further normalized because under noise effects Eq.~\eqref{eq:wavefnct_extract} is inexact. Bottom panel illustrates the density function $g(x,\omega) = |F^L_{2x-1}(\omega)|^2+|F^R_{2x-1}(\omega)|^2$, the bright peaks show the frequency of the modes. (a) MZM extracted using \textit{ibm\_montreal} device in the topological phase $\theta = \pi/4$, $\phi = \pi/8$, and $\varphi = 0$, using $D=11$ cycles (b) MPM extracted using the same device in topological phase $\theta = \pi/4$, $\phi = 3\pi/8$, and $\varphi = 0$, using $D=11$ cycles (c) MZM wavefunction extracted using \textit{ibm\_mumbai} device for interacting topological phase $\theta = \pi/4$, $\phi = \pi/16$, and $\varphi = \pi/16$, using $D=21$ cycles. (d) Difference between trivial phase $\theta = \pi/16$ and $\phi = \pi/4$ with two trivial boundary modes (blue, circles) and topological phase $\theta = \pi/4$ and $\phi = \pi/16$ (red, squares) quantified by $|T_{1,2x}|$ in Eq.~\eqref{eq:t_observable}, measured using \textit{ibm\_toronto} device. The result is calculated as the average of 10 random initial states and $D=11$ cycles. The error bars are one standard deviation. The expectation values used to generate all figures are calculated by averaging over 8192 circuit runs. }
    \label{fig:fig2}
\end{figure*}

In the case that $\lambda = 0$, the Hamiltonian in Eq.~\eqref{main_ham} is non-interacting and takes the simple quadratic form $H(t) = \sum_{\mu,\nu=1}^{2N} h_{\mu\nu}(t)\gamma_\mu\gamma_\nu$, where $h_{\mu\nu}$ is an antisymmetric Hermitian matrix. Due to its free fermionic nature, dynamics generated by such a Hamiltonian are classically  efficient to simulate (see \sectiontwobig). 
In this regime, depending on the ratio between $J$ and $h$, the system 
exhibits 
various phases including the symmetry-protected topological phases \cite{khemani2016phase,potter2016classification}, as summarized by the phase diagram shown in Fig.~\ref{fig:fig1}(b). Among these four phases, there is one (shown in white) that is trivial and topologically equivalent to a product state. There are three more topological phases. The first phase is topologically equivalent to the static Kitaev chain (blue). Under open boundary conditions, this phase exhibits two symmetry-protected modes at zero quasi-energy called Majorana zero modes (MZM). The remaining topological phases only occur in time-driven systems. For example, the second phase (red) exhibits a pair of Majorana $\pi$ modes (MPM) occurring at quasi-energy $\pi$ \cite{liu2013floquet}. The third phase (green) is distinct from the rest and hosts both MZM and MPM. 
Majorana modes in non-interacting systems manifest themselves by the presence of a pair of conserved boundary-localized operators $\Gamma^\omega_{s}$ that satisfy $
U^\dag_F \Gamma^\omega_s U_F= e^{-i\omega}\Gamma^\omega_s
$ \cite{jermyn2014stability,
alicea2016topological} where  index $s\in\{L,R\}$ defines right and left eigenmodes respectively, and $\omega \in\{0,\pi\}$. We will skip the frequency index $\omega$ when the context is clear. 

In the interacting case $\varphi \neq 0$, Majorana mode operators are not conserved across the spectrum, i.e. $U^\dag_F\Gamma^\omega_sU_F- e^{-i\omega} \Gamma^\omega_s= O(\tau^{-1})$. As a result, the observables associated with topological modes must decay with characteristic lifetime $\tau$. As was shown in Ref.~\cite{shtanko2020unitary}, 
if the bulk has vanishing dispersion, for small interaction angles $\varphi$ the lifetime diverges as $\tau\propto\mathcal O(\exp(c/\varphi))$, where the constant $c$ depends on the details of interaction. In practice, the lifetime may exceed dozens of Floquet cycles even if the bulk has finite dispersion and interactions are not too strong. This approximate conservation of Majorana modes leads to the persistent signal for some local observables when the rest reach infinite-temperature values. The primary goal of this work is to use this long-lived signal to restore the structure of the modes from the experiment. In this case we look for Majorana modes of the form
$
\Gamma_s = \sum_{\mu=1}^{2N}\psi^s_{\mu}\gamma_\mu,
$
where $\psi^s_\mu$ are real-valued wavefunctions. We also develop a method to distinguish trivial and topological modes.

Finally, we illustrate the exchange of Majorana modes and verify that the exchange results in the desired change of phase of the wavefunction. Conventionally, such an exchange is modeled by a slow adiabatic implementation of the unitary map $\mathcal E_{\rm ex}(\cdot) = U^\dag_{\rm ex}(\cdot)U_{\rm ex}$, where $U_{\rm ex} = \exp(-\frac \pi4 \Gamma_L\Gamma_R)$. Such a map provides $\mathcal E_{\rm ex}(\Gamma_R)  = \Gamma_L$ and $\mathcal E_{\rm  ex}(\Gamma_L)  =  -\Gamma_R$. While it is possible to carry out this procedure for one-dimensional Floquet systems \cite{bomantara2018simulation,bauer2019topologically}, it might require quantum circuits with depths beyond what is available on noisy devices. Below we show an alternative way to perform such an exchange on a noisy quantum hardware.\\

\textbf{Majorana wavefunctions}. Our first objective is to detect the presence of Majorana modes and measure the details of their structure using Fourier transformation~\cite{neil2021accurately,wikeckowski2018identification}. We assume that there are no other eigenmodes with zero or $\pi$ frequencies. In this case, we can use the asymptotic formula (see \sectionthreeexpl)
\be\label{eq:wavefnct_extract}
\begin{split}
&\psi^L_{\mu}(\omega) = F^L_\mu(\omega)/\sqrt{F^L_1(\omega)},\quad \psi^R_{\mu}(\omega) = F^R_\mu(\omega)/\sqrt{F^R_{2N}(\omega)},
\end{split}
\ee
where $\omega\in\{0,\pi\}$ is the  mode frequency, the positivity of $F^L_{1}(\omega)$  and $F^R_{2N}(\omega)$ 
is proven in \sectionthreeexpl, and
\be\label{eq:fourier_components}
F^s_\mu(\omega) = \limsum e^{i\omega n}\<\psi_0|U_F^{\dag n} \gamma^s_\mu U^{n}_F|\psi_0\>,
\ee
with $|\psi_0\> = \bigotimes_{i=1}^N|+\>_i$ being the product state of eigenstates of Pauli operator $X$ with eigenvalue one, and superscript $s\in\{L,R\}$ conforming with the representation of the Majorana operator. The order in the limit is important: one first takes the limit over the number of qubits $N$, and then the limit over the number of cycles $D$.

 In spite of the fact that the true limit cannot be reached experimentally, we measure the quantities $F^s_\mu(\omega)$ approximately using the largest available $N$ and $D$. The values of $D$ must not exceed the Majorana mode lifetime $\tau$ such that $D/\tau\ll1$. First, we initialize the qubits in the product state $|\psi_0\rangle$ and apply an $n$-cycle circuit as shown in Fig.~\ref{fig:fig1}(a) for $n=0,\dots,D-1$. For each circuit, we determine the expectation of $\gamma^{R,L}_\mu$. In the last step, we estimate the approximate value of $F_\mu(\omega)$ by summing up the results for each $n$-cycle circuit with corresponding Fourier coefficients $e^{i\omega n}/D$.

As an example, Fig.~\ref{fig:fig1}(c) shows the function $|F^L_1(\omega)|$ and its use in detecting Majorana modes and topological phases. The plots illustrate the dependence of this function on angle $\phi$ for the fixed $\theta=\pi/8$ and are similar to differential conductance spectra found in solid-state experiments \cite{beenakker2013search}. The function is equal to the topological mode density at the boundary, $F^L_1(0)=(\psi^L_{1})^2$. In particular, we observe a strong signal for $\omega=0$ in the topological phase for value $\phi = 0$, as it indicates the presence of the left MZM. Strong peaks also appear at frequencies $\omega = \pm\pi$ indicating the presence of MPM for $\phi=\pi/2$. The peaks' intensities decrease in the bulk for intermediate angles. For $\theta=\pi/8$ the boundary signal disappears at $\phi = \pi/8$ and $3\pi/8$ as the system transitions into the trivial phase.  

Next, the values of $F^s_\mu(\omega)$ for $\mu>1$ help us recover Majorana wavefunctions $\psi^s_\mu$. Plots in  Fig.~\ref{fig:fig2}(a)-(c) illustrate the normalized absolute values of wavefunctions corresponding to MZMs and MPMs in both non-interacting ($\varphi=0)$ and interacting ($\varphi=\pi/16)$ regimes. The results for the non-interacting regime are in a good agreement with the theoretical prediction. In the interacting regime, where we add an extra set of noisy two-qubit ZZ gates in each Floquet cycle, we expect to see a visibly higher level of noise in the resulting wavefunction 
 as can be seen in Fig. 2c. More data assessing the device's performance is presented \insupp.\\

\textbf{Detecting trivial modes}. Majorana modes may not be the only modes responsible for zero-frequency signals \cite{lee2014spin,kayyalha2020absence,valentini2021nontopological,yu2021non,saldana2021coulombic,wang2021spin}. In this work, we demonstrate that quantum simulators can be used to distinguish unpaired Majorana zero modes from the other topologically trivial localized excitations. Topological Majorana $\pi$ modes can be treated similarly. We use a generalized notation $\Delta_k  = \sum_{\nu}\psi_{k\nu}\gamma_\nu$ for both zero-frequency trivial and topological Majorana modes, $[\Delta_k,U_F]=0$, and $\psi_{k\nu}$ are real wavefunctions that are localized at the boundaries. In contrast to Majorana modes residing at opposite boundaries, any pair of trivial modes must always be localized near the same position.  Below we assume that the effect of disorder on the localization of the wavefunction is negligible.

We examine the two-point correlation function ($\omega=0$)
\be\label{eq:t_observable}
T_{\mu,\nu} = \limsum\<\tilde \psi_0|U_F^{\dag n} \gamma_\mu\gamma_\nu U^{n}_F|\tilde \psi_0\>,
\ee
where $|\tilde \psi_0\rangle = |\psi_{a}\>|s_2\>|s_3\>\dots|s_{N-2}\>|\psi_{a}\>
$, where $|s_i\rangle$ are random states in $Z$-basis with eigenvalues $s_i=\pm 1$, and $|\psi_{a}\rangle = \cos a|0\rangle + i\sin a|1\rangle$ for $a\in [0,\pi]$ being a phase. For simplicity, we consider  
the non-interacting case $\lambda=0$. Then the value of the correlation function for $\mu=1$ and $\nu=2$ is (see \sectionfourexpl)
\be
T_{1,2} = i\cos 2a\lim_{N\to\infty}\sum_{kk'}(\psi^2_{k1}\psi^2_{k'2}-\psi^2_{k2}\psi^2_{k'1}).
\ee
If there is only one pair of topological modes separated by the system size, then $T_{1,2} = 0$. Indeed, in this case $\sum_{kk'} \psi^2_{k1}\psi^2_{k'2}-\psi^2_{k2}\psi^2_{k'1} = (\psi^R_1)^2(\psi^L_2)^2-(\psi^R_2)^2(\psi^L_1)^2\propto O(2^{-\Theta(N)})$. A pair of trivial localized states at the left boundary, however, would result in $T_{1,2}>0$. At the same time, $T_{1,2N}$ is nonzero for both cases, while in the middle of the system, i.e. $T_{1,x} = 0$ for $x = 2cN$ and $1>c>0$. As a consequence, correlation function indicates the presence of zero-frequency modes but has a different structure for trivial and Majorana modes.

\begin{figure}[t!]
    \centering
    \includegraphics[width=\figsize\textwidth]{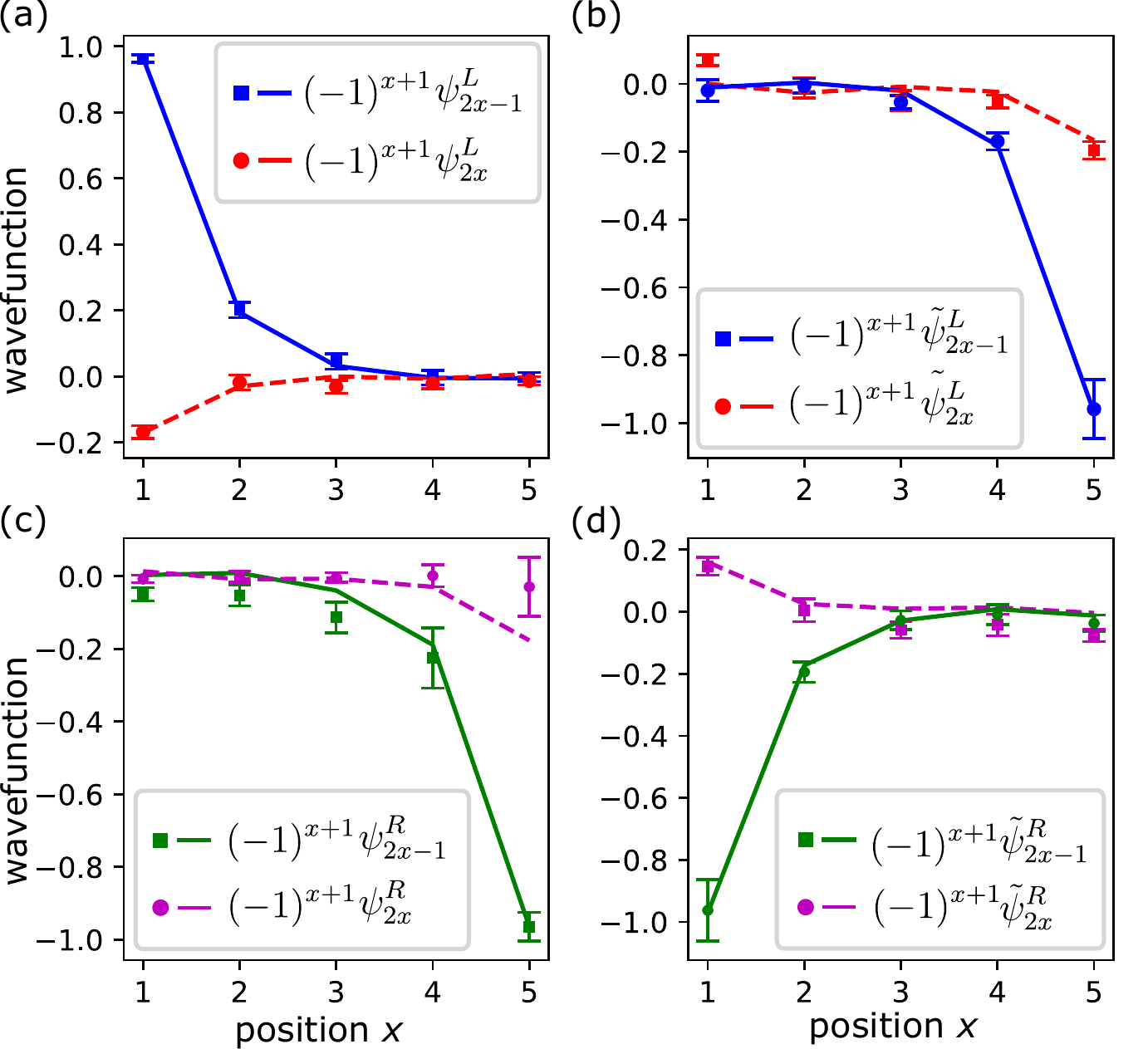}
    \caption{\textbf{Braiding.} Comparison of normalized original wavefunction in Eq.~\eqref{eq:wavefnct_extract} for the left (a) and right (c) modes and braided wavefunction in Eq.~\eqref{eq:braided_wavefunction} for the left (b) and right (d) modes with the theoretically estimated angle $\alpha_0 = 0.263127\pi$. We use the 5-qubit system on $ibm\_hanoi$ device with the parameters $\phi =\pi/16 $, $\theta=\pi/4$, and $\varphi=0$ and maximum number of cycles $D=11$, averaged over 30 experiments each with 8192 shots. Error bars are one standard of deviation. Experimental data are represented by points, whereas theoretical predictions are represented by lines. Plots illustrate that modes acquire a relative minus sign after braiding $\tilde\psi^L_\mu = \psi^R_\mu$ and $\tilde\psi^R_\mu = -\psi^L_\mu$.}
    \label{fig:fig3}
\end{figure}

In order to illustrate this method, we consider two examples of non-interacting systems, $\lambda=0$. In the first example, we use the Hamiltonian in the topological phase $(\theta=\pi/4,\phi=\pi/16)$. We compare this case to a trivial system with a slightly modified Hamiltonian. In particular, we set to zero the $XX$-term and $Z$-term for the first and the last qubits, thus decoupling them from the rest of the system (see \sectionfiveexpl). The state of the rest of the qubits
is governed by the Floquet evolution in Eq.~\eqref{eq:floquet_unitary} with parameters $(\theta=\pi/16, \phi=\pi/4)$. This modification mimics a possible error when some of the links between the qubits are dysfunctional. The modification produces two trivial full-electron modes at opposite boundaries, which is equivalent to four non-topological Majorana modes $\Delta_1=\gamma_1$, $\Delta_2=\gamma_2$, $\Delta_3=\gamma_{2N-1}$, and $\Delta_4=\gamma_{2N}$. Using only the observables in Eq.~\eqref{eq:fourier_components}, it is difficult to distinguish between these modes and topological modes. However, if we measure the sequence $|T_{1,2x}|$ for $x=1,\dots,N$ for a random configuration of the initial state, it shows an important difference. As shown in Fig.~\ref{fig:fig2}(d), the curve for trivial case is characterized by two peaks at $x=1$ and $x=N$, while topological system has only one peak around $x=N$. Thus, observation of a single peak provides reliable evidence distinguishing topological Majorana modes from the other possible trivial modes.\\

\textbf{Braiding Majorana modes}. 
Finally, we introduce a method for braiding the Majorana modes, which we call Fast Approximate Swap (FAS). Here we examine the parametrized map
\be\label{eq:braiding_map}
\mathcal E_\alpha(\cdot) := \lim_{N,D\to\infty}\frac 1D\sum_{n=0}^{D-1} U_{n\alpha}^\dag (\cdot)\,U_{n\alpha}\;,
\ee
where $U_{n\alpha} = U_F^{n} \exp(-\alpha\gamma_1\gamma_{2N})U_F^n$, and $\alpha\in[0,\pi]$ is a real parameter. This quantum channel is equivalent to selecting the unitary $U_{n\alpha}$ for $n=0,\dots D-1$ with uniform probability $1/D$.

Let us assume that the system is reflection-symmetric such that the localized modes satisfy $\psi^L_1 = \psi^R_{2N} = \xi$ and $\xi^2\geq 1/2$. Then, by setting the angle $\alpha_0 = \arcsin(1/\sqrt{2}\xi)$, the action of the map on topological Majorana operators is
\be\label{eq:exchange_map}
\mathcal E_{\alpha_0}(\Gamma_R) = p\Gamma_L, \quad \mathcal E_{\alpha_0}(\Gamma_L) = -p\Gamma_R,
\ee
where $p = \sqrt{2\xi^2-1}\leq 1$ (see \sectionfiveexpl). This procedure constitutes approximate  FAS method of braiding that aims to replace the conventional adiabatic process. This method applies in both the interacting and non-interacting regimes.

We also establish  the effect of proposed braiding map on Majorana operators in absence of localization in non-interacting limit $\lambda=0$,
\be\label{eq:map_phys_majorana}
\begin{split}
&\mathcal E_{\alpha_0}(\gamma_\mu)=p(\psi^R_\mu \Gamma_L-\psi^L_\mu \Gamma_R).
\end{split}
\ee
This allows us to detect the relative phase of Majorana fermions after braiding.
The braided mode wavefunction can be defined similarly to Eq.~\eqref{eq:wavefnct_extract} as 
\be\label{eq:braided_wavefunction}
\tilde \psi^s_\mu = \frac 1{\mathcal N} \<\psi_0|\mathcal E_{\alpha_0}(\gamma^s_\mu)|\psi_0\>,
\ee
where $\mathcal N$ is normalization coefficient.
 After braiding, assuming that the system is reflection-symmetric, we expect the braided wavefunctions to satisfy $\tilde\psi^L_\mu =\psi^R_\mu$ and $\tilde\psi^R_\mu = -\psi^L_\mu$. This behavior is illustrated in Fig.~\ref{fig:fig3}, where we compare wavefunctions in Eq.~\eqref{eq:wavefnct_extract} and Eq.~\eqref{eq:braided_wavefunction}.
 
Our braiding procedure depends on the 
parameter $\alpha_0$ that is generally unknown without prior access to the system. Although here we calculated it analytically, it may be difficult to 
find this angle theoretically for generic Hamiltonians, in which case it would be necessary to rely on experimental data. For instance, one can evaluate the angle using the measured Majorana wavefunction. In \sectionfiveexpl, we discuss an alternative method of finding the proper value of $\alpha_0$.\\

\discussion  In this work, we propose a framework for detecting, verifying, and braiding Majorana modes on near-term programmable quantum simulators by employing the Floquet dynamics. This scheme can be generalized to the continuous evolution of static Hamiltonians by replacing the discrete Fourier transformation in our work by its continuous version. It would have been possible to run our experiments on larger qubit devices. However, Majorana modes exist at the boundaries rather than the bulk, and our current experiments are sufficient to make conclusive statements about the detection and braiding of the topological Majorana modes.

The finite lifetime of the Majorana modes is attributed to natural tendency of Floquet systems to ``heat up''. Adding disorder such as randomization of phases in $Z$ gates, i.e. $\phi \to \phi+\delta_i$, where $\delta_i \in [-W,W]$, can reduce the heating because of the many-body localization (MBL) phenomenon \cite{abanin2019manybody}. However, such a simplistic scheme may require disorder values $W$ that can cause transition into the trivial phase. Avoiding phase transition would require finding good model parameters \cite{decker2020floquet} or using more sophisticated techniques \cite{shtanko2020unitary}.

This work illustrates the power of synthetic near-term qubit-based quantum computers for demonstrating and studying topological phases of electronic systems. Indeed, if we neglect noise, the observed dynamics of 
bosonic system can perfectly simulate
fermionic topological phases 
if the measurements are made 
in the non-local qubit basis. Unlike solid state devices, however, the Majorana modes in this work are subject to decoherence because noise breaks the parity symmetry protecting the topological phase in fermionic systems. This is a serious drawback for using then in topological quantum computation. Nonetheless, with improvements in coherence time, the role of noise can be sufficiently reduced as to make this type of quantum simulation useful in studying topological quantum matter.
Solid state systems such as nanowire devices \cite{lutchyn2010majorana,oreg2010helical} can be studied through continuous-time local Hamiltonian simulations. Floquet systems similar to those studied in this work, in the limit of large frequency, are equivalent to such simulations. A potential model of a nanowire could incorporate a qubit ``ladder" representing the two spin values (up and down) and local gates that account for hopping, spin-orbital coupling, and density-density interactions.

This work can be extended to regimes beyond the current classical simulation capabilities. By using devices with higher connectivity, it is possible to study generic two-dimensional materials with a broader variety of topological phases and to explore new possibilities for topological quantum computation. Further, the method studied for extracting the Majorana modes may be extended to the study of local integrals of motion in many-body localized systems, similarly to the proposal in Ref.~\cite{chandran2015constructing}.\\

\textbf{Acknowledgements}. We thank Frank Pollmann and Bela Bauer for helpful discussions. We would also like to thank Sergey Bravyi, Zlatko Minev, and Sarah Sheldon for their help in preparing this publication. The research was partly supported by the IBM Research Frontiers Institute. This research was also supported in part by the National Science Foundation under Grant No. NSF PHY-1748958.\\

\textbf{Code availability}. The code to run the experiment and access the data presented in this study is publicly available on GitHub using the link: \url{https://github.com/IBM/observation-majorana.git}

\clearpage

\pagebreak

\setcounter{page}{1}
\setcounter{equation}{0}
\setcounter{figure}{0}
\renewcommand{\theequation}{S.\arabic{equation}}
\renewcommand{\thefigure}{S\arabic{figure}}
\renewcommand*{\thepage}{S\arabic{page}}

\onecolumngrid

\begin{center}
{\large \textbf{Supplementary Information for \\``\ourtitle"}}\\
\vspace{0.25cm}

Nikhil Harle,$^{1,2}$ Oles Shtanko$^3$, and Ramis Movassagh$^{2,4}$

\vspace{0.2cm}

\textit{
$^1$\yale\\
$^2$\cambridge\\
$^3$\almaden\\
$^4$\google}
\end{center}
\vspace{1cm}

\twocolumngrid

\section{\sectionone: Hardware setting}
We replicate the Floquet dynamics in Eq.~\eqref{eq:floquet_unitary} on IBM Qiskit using the circuit in Fig.~\ref{fig:fig1}(a). We transpile the circuit on Qiskit using native gates and run it on the IBM quantum hardware. In particular, for the special value $\theta=\pi/4$ ($\varphi=\pi/4$), each $XX$-gate ($ZZ$-gate) requires a CNOT gate in combination with single qubit gates. For other non-zero angle values, two-qubit gates require two CNOTs in combination with other single-qubit gates. Therefore, to reduce the depth, part of the experiment is designed to investigate the case $\theta = \pi/4$.

\begin{figure*}[h!]
    \centering
    \includegraphics[width=0.75\textwidth]{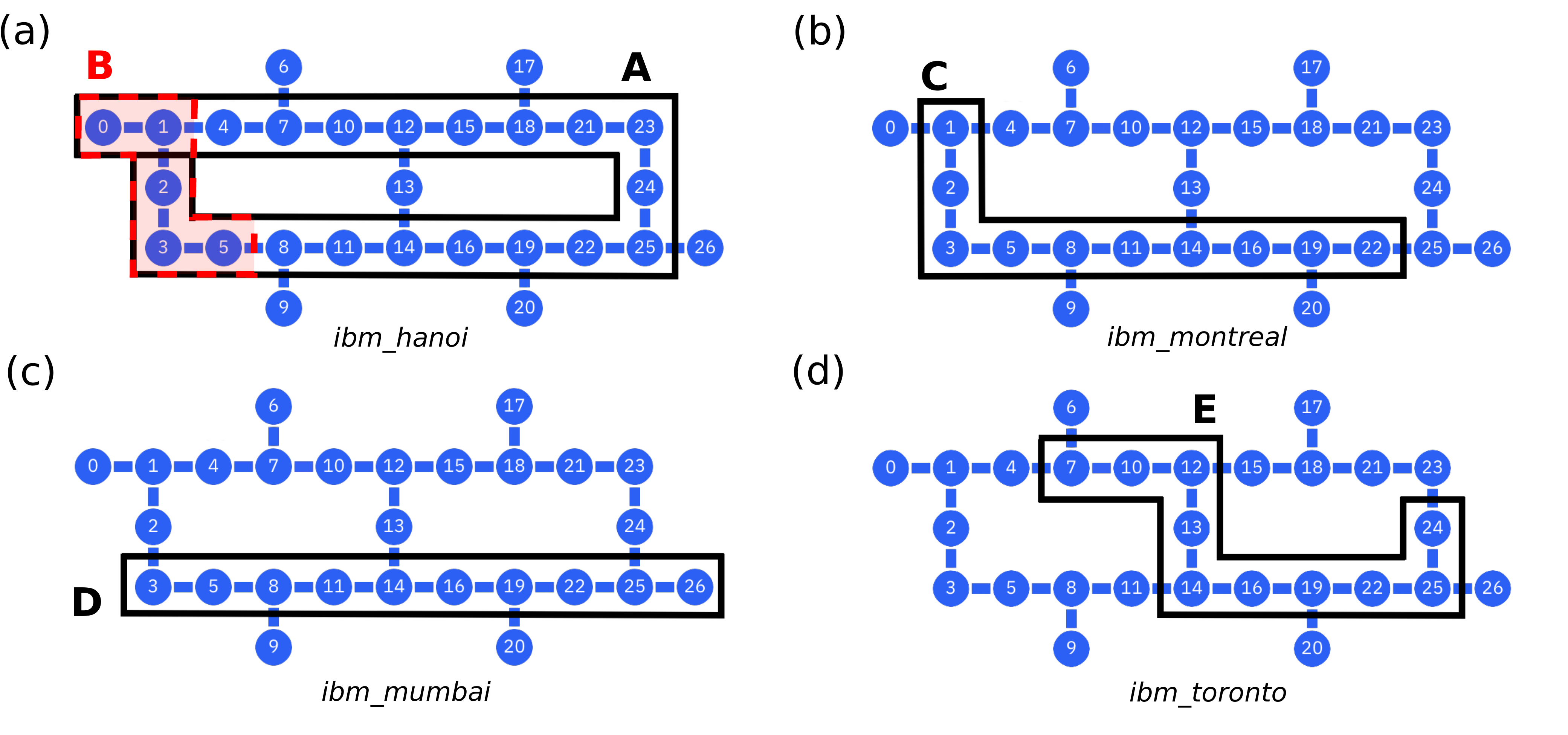}
    \caption{\textbf{Configuration of IBM hardware}. (a) Layout of the \textit{ibm\_hanoi} device. Sequence A of 21 qubits was used to generate the frequency-resolved boundary oscillations shown in Fig.~\ref{fig:fig1}(c); sequence B of 10 qubits was used to perform the braiding experiment shown in Fig.~\ref{fig:fig3}. (b) Layout for the \textit{ibm\_montreal} device. Sequence C of 10 qubits was used to reproduce the Majorana mode tomography in Fig.~\ref{fig:fig3}(a),(b) (C). Layout for the \textit{ibm\_mumbai} device. Sequence D of 10 qubits used to generate Majorana modes tomography in fig.~\ref{fig:fig2}(c). Layout for the \textit{ibm\_toronto} facility. Sequence D of 10 qubits used to generate the topological/non-topological mode separation experiment in Fig.~\ref{fig:fig2}(d).}
    \label{fig:fig1s}
\end{figure*}

\begin{figure*}[h!]
    \centering
    \includegraphics[width=1\textwidth]{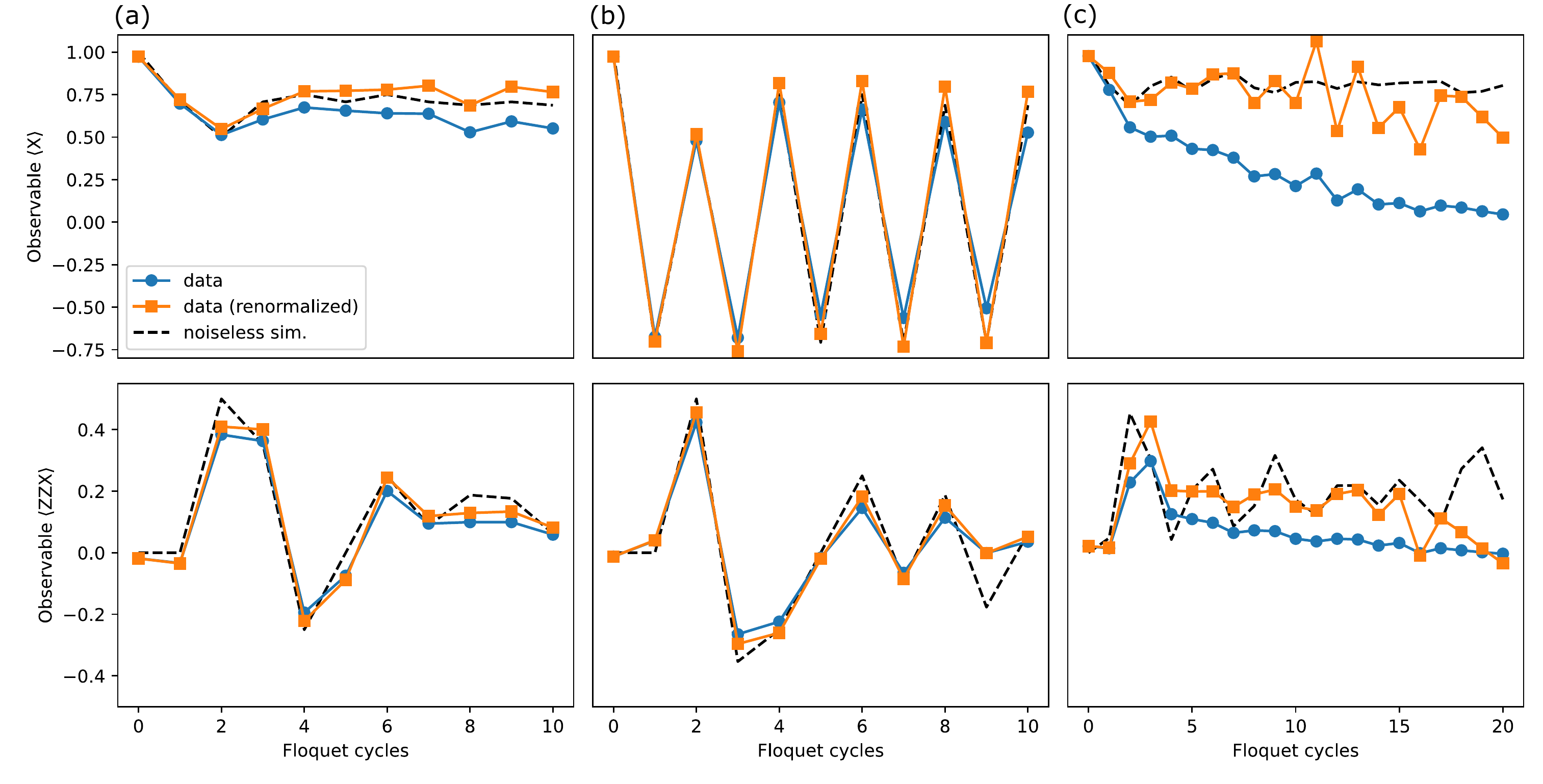}
    \caption{\textbf{Experimental data.} The figure shows the raw data for the expectations of the operators $X_0$ (top) and $Z_0Z_1X_2$ (bottom) for noiseless simulation (dashed black curve), the experimental data (circles, blue), and rescaled experimental data (squares, orange), where the rescaling takes the form $\exp(\Gamma n)$, where $n$ is the index of the Floquet cycle and $\Gamma$ is the compensated decay rate. This rescaling accounts for the decay of Majorana modes that does not contribute to the measured wavefunction \seesupp.  Panel (a) shows the results for MZM in Fig.~\ref{fig:fig2}(a) (here, we use $\Gamma = 0.0328$ for rescaling), Panel (b) shows the results for MPM in Fig.~\ref{fig:fig2}(b) (using $\Gamma = 0.0376$), and Panel (c) shows the results for MZM in the presence of $ZZ$ gates shown in Fig.~\ref{fig:fig2}(c) (using  $\Gamma = 0.120$) .}
    \label{fig:fig2s}
\end{figure*}

 For the simulation of quantum systems, Majorana operators must be encoded using qubits. We use the Jordan-Wigner transformation to implement this encoding. In particular, we define left representation as $\gamma^{L}_{2k-1} = \mathcal Z^{L}_{k}X_{k}$ and  $\gamma^{L}_{2k} = \mathcal Z^{L}_{k}Y_{k}$, where $\mathcal Z^L_k = \prod_{i=1}^{k-1} (-Z_i)$ are $Z$-string operators and $k = 1,\dots,N$. This representation is equivalent to the most common convention. Alternatively, the right representation is $\gamma^{R}_{2k-1} = \mathcal Z^{R}_{k}Y_{k}$ and $\gamma^{R}_{2k} = -\mathcal Z^{R}_{k}X_{k}$, where $\mathcal Z^R_k = \prod_{i=k+1}^N (-Z_i)$. A more traditional approach for accessing 
 these operators experimentally is to measure each qubit inside the string in the basis ($X$, $Y$, or $Z$), returning $\pm 1$ values for each qubit. In this case, we can use the product of the obtained results as the measured value. The disadvantage of 
 the conventional method is that it has lower precision due to accumulated measurement errors. Therefore, we adopt a scheme that involves only one measurement, as described below.
 
 In order to measure $\gamma^{L}_\mu$, we use the expressions
\be
\begin{split}
U^\dag_{k}X_1U_{k} = (-1)^{k-1}\mathcal Z^L_kX_{k},\qquad V^\dag_kY_1V_k = (-1)^{k-1}\mathcal Z^L_kY_{k},
\end{split}
\ee
where
\be
U_{k} := U^{YX}_1 \dots U^{YX}_{k-1} , \qquad V_k := U^{XY}_1 \dots U^{XY}_{k-1},
\ee
and can be expressed as a product of two-qubit unitaries $U^{YX}_j := \exp(-i\frac{\pi}{4} Y_j X_{j+1})$ and $U^{XY}_j := \exp(i\frac{\pi}{4} X_j Y_{j+1})$.
According to these expressions, to measure the string operator $\gamma^L_{2k-1}= \mathcal Z^L_kX_k$, we apply the gates $U^{YX}_j$ consecutively and in the reverse order for $j = k-1,\dots,1$ as shown in Fig.~\ref{fig:fig1}(a). Next, we measure the first qubit ($j=1$) in X-basis. 
 Similarly, to measure the operator $\gamma^L_{2k} = \mathcal Z^{L}_kY_k$, we perform similar gate sequence but with unitaries $U^{XY}_j$ and measure the qubit $j=1$ in $Y$-basis. Finally, the measurement of $\gamma^{R}_\mu$ can be done by mirroring the entire circuit in Fig.~\ref{fig:fig1}(a) upside down. 
 
 Next, to  evaluate $T_{1,2k}$ defined in Eq.~\eqref{eq:t_observable} we need to probe the operator $i\gamma_1\gamma_{2k} = (-1)^{k} Y_1 Z_2\dots Z_{k-1}Y_k$. In order to measure the Pauli string $Y_1 Z_2\dots Z_{k-1}Y_k$, we note that
 \be
 Y_1 Z_2\dots Z_{k-1}Y_k =(-1)^{k-1} G_1^\dag \mathcal Z^L_kY_{k}G_1
 \ee
 where $G =  HS^\dag$, where $S = {\rm diag}\{0,i\}$ is $S$-gate and $H$ the Hadamard gate. Thus, the procedure of measuring this operator is the same as $\mathcal Z^L_{k}Y_{k}$ with the difference that in the latter we apply $G_1$ before the series of $U^{XY}$ gates. 

Finally, applying the  unitary $ \exp\left(-\alpha\gamma_1\gamma_{2N}\right)$, which is necessary for generating $U_{n\alpha}$  in Eq.~\eqref{eq:braiding_map}, can be implemented in a similar manner. The implementing circuit consists of the series of gates $U^{XY}_j$ for $j=2,\dots,N-1$, followed by $\exp\left(-i\alpha Y_1Y_2\right)$, then by the series of $U^{XY\dag}_j$ in reverse order.

Results are obtained using IBM quantum hardware \cite{ibm_quantum_devices}. The experiments are performed on four different 27-qubit devices: \textit{ibm\_hanoi}, \textit{ibm\_montreal}, \textit{ibm\_mumbai}, and \textit{ibm\_toronto}. The number of qubits utilized for each experiment vary; Fig.~\ref{fig:fig1s} shows the chosen subsets.
For example, we perform experiments represented by Fig.~\ref{fig:fig2} using 10 qubits as the smallest system size that exhibits an overlap between unpaired Majorana modes which is smaller than the effect of noise. 
In contrast, braiding experiments are performed on 5 qubits as the effect of noise is stronger due for deeper circuits. The depth of the circuits are chosen to be between 11-21 cycles for most experiments.

\section{\sectiontwo: eigenmodes and their properties}

Analyzing the eigenmodes of a system can be an effective way to describe its dynamics. In this section, we introduce and analyze the properties of eigenmodes for time-periodic (Floquet) dynamics generated by the unitary operator $U_F$ in Eq.~\eqref{eq:floquet_unitary}. We prove three propositions concerning the eigenmodes of the system, which we then use in the following sections.

\begin{prop}\textbf{(Existence of eigenmodes)}\label{prop1}
For any unitary $U_F$, there exists a complete set of eigenoperators (eigenmodes) $\Delta_b$ and real eigenfrequencies $\omega_b$, $1\leq b\leq 4^N$, such that
\be\label{eqs:eigenproblem}
\begin{split}
&U_F^\dag \Delta_b U_F = e^{-i\omega_b}\Delta_b,\\
&\Tr(\Delta_b^\dag\Delta_{b'})=2^N\delta_{bb'},
\end{split}
\ee
where $N$ is the number of qubits and $\delta_{bb'}$ is the Kronecker delta.
\end{prop}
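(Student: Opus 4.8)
The plan is to recognize the eigenvalue problem in Eq.~\eqref{eqs:eigenproblem} not as a statement about states but as the spectral decomposition of a single finite-dimensional unitary operator acting on the space of \emph{operators}. First I would fix the ambient vector space: the set $\text{End}(\mathcal H)$ of all linear operators on the $N$-qubit Hilbert space $\mathcal H = (\mathbb C^2)^{\otimes N}$, which is a complex vector space of dimension $(2^N)^2 = 4^N$ — matching the claimed range $1\le b\le 4^N$. I would equip it with the Hilbert--Schmidt inner product $\langle A,B\rangle = \Tr(A^\dag B)$, and define the adjoint-action superoperator $\mathcal U(A) := U_F^\dag A U_F$, which is manifestly linear in $A$. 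With this framing, the entire proposition reduces to diagonalizing $\mathcal U$.

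The key step is to show that $\mathcal U$ is itself unitary with respect to the Hilbert--Schmidt inner product. Since $\mathcal U(A)^\dag = U_F^\dag A^\dag U_F$, a short computation using $U_F U_F^\dag = I$ and cyclicity of the trace gives
\be
\langle \mathcal U(A),\mathcal U(B)\rangle = \Tr\bigl(U_F^\dag A^\dag U_F\,U_F^\dag B U_F\bigr) = \Tr\bigl(A^\dag B\,U_F U_F^\dag\bigr) = \langle A,B\rangle.
\ee
As $\mathcal U$ is an inner-product-preserving linear map on a finite-dimensional complex space, it is unitary, and the spectral theorem for normal operators supplies an orthonormal eigenbasis $\{\Delta_b\}_{b=1}^{4^N}$ with all eigenvalues on the unit circle. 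Writing each eigenvalue as $e^{-i\omega_b}$ with $\omega_b\in\mathbb R$ reproduces the first line of Eq.~\eqref{eqs:eigenproblem}. The spectral theorem delivers $\Tr(\Delta_b^\dag\Delta_{b'})=\delta_{bb'}$; rescaling each $\Delta_b$ by $2^{N/2}$ (which leaves the homogeneous eigenvalue equation unchanged) fixes the normalization convention $\Tr(\Delta_b^\dag\Delta_{b'})=2^N\delta_{bb'}$, the natural choice that matches, e.g., the Pauli-string normalization $\Tr(P^\dag P)=2^N$. Completeness is automatic, since an orthonormal eigenbasis of $\mathcal U$ spans all of $\text{End}(\mathcal H)$.

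I do not anticipate a genuine obstacle here: once the problem is recast as diagonalizing the adjoint-action superoperator, everything collapses to standard finite-dimensional linear algebra. The only points needing minor care are the unitarity computation above — specifically the correct placement of the daggers and the cyclic reordering — and the observation that the particular Floquet product form of $U_F$ in Eq.~\eqref{eq:floquet_unitary} is never used. The argument applies verbatim to \emph{any} unitary $U_F$, which is precisely the generality asserted in the statement, and this is worth emphasizing because the later propositions will presumably specialize to the symmetries of the specific $U_F$ whereas this one does not.
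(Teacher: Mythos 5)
Your proof is correct, and it is the paper's argument written in basis-free form: the paper proves exactly the same statement by representing your superoperator $\mathcal U(A)=U_F^\dag A U_F$ in the Pauli basis, i.e.\ defining $E_{\alpha\beta}=2^{-N}\Tr(U_F^\dag P_\alpha U_F P_\beta)$, showing $E$ is a real orthogonal matrix, and assembling $\Delta_b=\sum_\alpha v_{b\alpha}P_\alpha$ from its eigenvectors $v_b^T E = e^{-i\omega_b}v_b^T$. The conceptual core — diagonalize the adjoint action as a unitary on operator space with the Hilbert--Schmidt inner product — is identical, and your isometry computation is the coordinate-free version of the paper's verification that $EE^T=I$. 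What your route buys is brevity and transparency: unitarity of $\mathcal U$ plus the spectral theorem, with the $2^N$ normalization handled by a rescaling, and the observation (also correct) that nothing about the Floquet product form of $U_F$ is used. What the paper's coordinate route buys is strictly more than the proposition asserts: since $U_F^\dag P_\alpha U_F$ and $P_\beta$ are both Hermitian, the matrix $E$ is \emph{real}, not merely unitary, so its eigenvectors come in complex-conjugate pairs and hence the eigenmodes satisfy the conjugation symmetry that $\Delta_b^\dag$ is again an eigenmode at frequency $-\omega_b$. This operator-level particle-hole structure is invisible to the bare spectral-theorem argument, yet the paper leans on exactly this pairing later (e.g.\ $\psi_{\sigma(k)}=\psi_k^*$, $\Delta^0_{\sigma(k)}=\Delta^{0\dag}_k$ in the free-fermion and two-point-function sections); the explicit Pauli expansion also feeds directly into the overlap coefficients $w_{\mu b}=2^{-N}\Tr(\gamma_\mu\Delta_b^\dag)^*$ of Proposition 2. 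So your proof fully establishes the stated proposition, but if you were writing the supplement you would want to add the one-line realness remark to support the downstream claims.
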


\begin{proof}
Consider $\mathcal P = \{P_\alpha: \alpha = 1,\dots,d\}$ as a complete set of $2^N\times 2^N$ Pauli basis operators, where $d=4^N$ is the dimension of the space they span. The unitary transformation of each basis operator is given by
\be\label{eqs:u_trans}
U^\dag_F P_\alpha U_F = \sum_{\beta=1}^{d} E_{\alpha\beta} P_\beta,
\ee
where
$
E_{\alpha\beta}  := 2^{-N}\Tr (U_F^\dag P_\alpha U_F P_\beta)$ are matrix elements of a real orthogonal matrix, 
\be
E_{\alpha\beta} = E^*_{\alpha\beta},\quad EE^T = E^T E = I.
\ee
To prove that $E$ is orthogonal, we use Eq.~\eqref{eqs:u_trans} to express
\be\label{eqs:doubleP}
\begin{split}
U^\dag_F P_\alpha P_{\alpha'} U_F = \sum_{\beta,\beta'=1}^{d} E_{\alpha\beta}E_{\alpha'\beta'} P_\beta P_{\beta'}.
\end{split}
\ee
Next, we take the trace for the two sides of Eq.~\eqref{eqs:doubleP} and use the orthogonality of Pauli matrices, $\Tr (P_\alpha P_\beta)= 2^N\delta_{\alpha\beta}$. Using the fact that $E_{\alpha\beta} = E^*_{\alpha\beta}$, we get that
\be
\sum_{\beta=1}^{d}E_{\alpha\beta} E^T_{\beta\alpha'} =\sum_{\beta=1}^{d} E_{\alpha'\beta}E^T_{\beta\alpha} = \delta_{\alpha\alpha'}.
\ee
Since the matrix $E$ is orthogonal, it has an orthonormal set of eigenstates and eigenvalues that lie on the unit circle in the complex plane:
\be
v^T_b E  = e^{-i\omega_b} v^T_b.
\ee
Using this set, we construct the operators
\be
\Delta_b = \sum_{\alpha=1}^{d} v_{b\alpha} P_\alpha,
\ee
where $v_{b\alpha}$ stands for the $\alpha$-th entry of $v_b$.
These operators satisfy
\be
\begin{split}
U_F^\dag \Delta_b U_F& = \sum_{\alpha=1}^{d} v_{b\alpha} U_F^\dag P_\alpha U_F = \sum_{\alpha,\beta=1}^d v_{b\alpha}E_{\alpha\beta} P_\beta\\
&= e^{-i\omega_b}\sum_{\beta=1}^dv_{b\beta}P_\beta = e^{-i\omega_b}\Delta_b,
\end{split}
\ee
which proves the first equation in Eq.~\eqref{eqs:eigenproblem}.
Furthermore, the orthogonality of the eigenstates of the operator $E$ leads to the expression
\be
\begin{split}
\Tr(\Delta^\dag_b\Delta_{b'}) &= \sum_{\alpha,\beta=1}^dv^*_{b\alpha}v_{b\beta}\Tr (P_\alpha P_\beta) \\
&= 2^N\sum_{\alpha=1}^dv^*_{b\alpha}v_{b'\alpha} = 2^N\delta_{bb'},
\end{split}
\ee
proving the second equation in Eq.~\eqref{eqs:eigenproblem} and thus completing our proof.\end{proof}

The spectrum of the system can contain degeneracies. 
If we define $B$ as the full set of eigenmodes, then $B_\omega = \{b_1,\dots, b_m\}$ is defined as the subset of eigenmodes with the same frequency $\omega_{b_k} = \omega$ for $1\leq k\leq m$ in the limit $N\to\infty$. We define $\overline{B_\omega} = B\setminus B_\omega$ as the set of eigenmodes with frequencies different from $\omega$.

\begin{prop} For the dynamics in Eq.~\eqref{eq:floquet_unitary}, non-trivial eigenmodes $\Delta_b\neq I$ and Majorana operators satisfy
\be\label{eqs:delta_gamma_connection}
\Delta_b = \sum_{\mu=1}^{2N} w_{\mu b}\gamma_\mu + C_\perp, \quad \gamma_\mu = \sum_{b\in B} w^*_{\mu b}\Delta_b,
\ee
where $w_{\mu b}$ are complex-valued coefficients and $C_\perp$ is an operator that satisfies $\Tr(\gamma_\mu C_\perp)=0$ for all $\mu$.
\end{prop}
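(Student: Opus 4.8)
The plan is to read both identities as the orthogonal decomposition of a vector in the operator Hilbert space equipped with the Hilbert--Schmidt inner product $\langle A,B\rangle := 2^{-N}\Tr(A^\dagger B)$, and to deduce them from two facts: the Majorana operators form an orthonormal family under this product, and by Proposition~\ref{prop1} the eigenmodes $\{\Delta_b\}_{b\in B}$ form a complete orthonormal basis of the full $4^N$-dimensional operator space.

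First I would record the Hilbert--Schmidt orthogonality of the Majoranas. Each $\gamma_\mu$ is Hermitian and, under the Jordan--Wigner map, a non-identity Pauli string, hence traceless with $\gamma_\mu^2 = I$. From the Clifford relations $\{\gamma_\mu,\gamma_\nu\} = 2\delta_{\mu\nu}$ and cyclicity of the trace, for $\mu\neq\nu$ one has $\Tr(\gamma_\mu\gamma_\nu) = -\Tr(\gamma_\mu\gamma_\nu) = 0$, while $\Tr(\gamma_\mu^2) = \Tr(I) = 2^N$, so
\be
\Tr(\gamma_\mu\gamma_\nu) = 2^N\delta_{\mu\nu}.
\ee
Thus $\{2^{-N/2}\gamma_\mu\}_{\mu=1}^{2N}$ is an orthonormal system spanning a $2N$-dimensional subspace $\mathcal M$ of the operator space.

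The first identity is then the projection of $\Delta_b$ onto $\mathcal M$. Setting $w_{\mu b} := 2^{-N}\Tr(\gamma_\mu\Delta_b)$ and $C_\perp := \Delta_b - \sum_{\mu=1}^{2N} w_{\mu b}\gamma_\mu$, orthonormality of the $\gamma_\mu$ gives, for every $\nu$,
\be
\Tr(\gamma_\nu C_\perp) = \Tr(\gamma_\nu\Delta_b) - \sum_{\mu} w_{\mu b}\Tr(\gamma_\nu\gamma_\mu) = 2^N w_{\nu b} - 2^N w_{\nu b} = 0,
\ee
which is exactly the first equation together with the asserted property $\Tr(\gamma_\mu C_\perp)=0$. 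For the second identity I would invoke completeness: since $\{\Delta_b\}_{b\in B}$ is an orthonormal basis, $\gamma_\mu = \sum_{b\in B}\langle\Delta_b,\gamma_\mu\rangle\Delta_b$, and a one-line conjugation check using $\gamma_\mu^\dagger=\gamma_\mu$,
\be
\langle\Delta_b,\gamma_\mu\rangle = 2^{-N}\Tr(\Delta_b^\dagger\gamma_\mu) = \overline{2^{-N}\Tr(\gamma_\mu\Delta_b)} = w^*_{\mu b},
\ee
shows the two expansions share a single coefficient matrix $w_{\mu b}$. Finally I would address the hypothesis $\Delta_b\neq I$: the identity is an eigenmode ($\omega=0$) but is orthogonal to every traceless Majorana, so all its coefficients $w_{\mu,I}$ vanish and it contributes nothing to the Majorana expansions; excluding it simply discards the trivial decomposition $I = C_\perp$.

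I expect no serious obstacle here: once the orthonormality of the $\gamma_\mu$ and the completeness of the $\Delta_b$ are in hand, both statements are routine linear algebra in the operator Hilbert space. The only point demanding care is ensuring that the \emph{same} coefficients $w_{\mu b}$ serve both identities rather than two merely related families, which is precisely guaranteed by the conjugation relation $\langle\Delta_b,\gamma_\mu\rangle = w^*_{\mu b}$ established above.
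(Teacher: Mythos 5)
Your proof is correct and takes essentially the same route as the paper: the second identity follows from the completeness of the eigenmode basis established in Proposition~1, and the coefficients are fixed by the Hilbert--Schmidt trace pairing $w_{\mu b}=2^{-N}\Tr(\gamma_\mu\Delta_b)$, exactly matching the paper's definition $w^*_{\mu b}=2^{-N}\Tr(\gamma_\mu\Delta_b^\dag)$. Your direct orthogonal-projection construction of $C_\perp$ (defining it as the remainder and verifying $\Tr(\gamma_\nu C_\perp)=0$ from $\Tr(\gamma_\mu\gamma_\nu)=2^N\delta_{\mu\nu}$) is a mild streamlining of the paper's expansion of $\Delta_b$ in Majorana monomials, and your explicit handling of the $\Delta_b\neq I$ hypothesis and of the conjugation consistency of the two coefficient families are welcome details the paper leaves implicit.
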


\begin{proof}
First, we prove the second equation.
Since $\Delta_b$ form a complete basis, we have
\be\label{eqs:decomposition}
O = \frac 1{2^N} \sum_{b\in B} \Tr(O \Delta^\dag_b)\Delta_b.
\ee
Then, the second part of Eq.~\eqref{eqs:delta_gamma_connection} follows from this expression by setting $w^*_{\mu b} := 2^{-N}\Tr(\gamma_\mu\Delta^\dag_b)$.
Any many-body operator can be written as a decomposition of Majorana operators,
\be
\Delta_b = \sum_{\mu=1}^{2N} C^{(1)}_{b,\mu}\gamma_\mu+ \sum_{\mu\nu=1}^{2N} C^{(2)}_{b,\mu\nu}\gamma_\mu\gamma_\nu+\dots\;.
\ee
Multiplying both sides by the operator $\gamma_\mu$ and taking the trace, we get
\be
\Tr(\Delta_b\gamma_\mu) = 2^N C^{(1)}_{b,\mu}.
\ee
That is, we derive the expression
\be
C^{(1)}_{b,\mu} = w_{\mu b}. 
\ee
which leads to Eq.~\eqref{eqs:delta_gamma_connection} and completes our proof.
\end{proof}

As a useful tool for analytical derivations, we introduce the Fourier channel
\be\label{eqs:fourier_channel}
\mathcal F_\omega(\cdot) := \limsum e^{i\omega n} U_F^{\dag n} (\cdot)U^n_F,
\ee
where we first take $N$ and then $D$ to infinity. The action of this channel is expressed in terms of the eigenmodes of the system by the following Proposition.

\begin{prop}
 The action of the Fourier map in Eq.~\eqref{eqs:fourier_channel} can be expressed as
\be\label{eqs:fourier_rep}
\mathcal F_{\omega}(O) =\lim_{N\to\infty} \frac 1{2^N}\sum_{b\in B_\omega}\Tr(O\Delta^\dag_b)\Delta_b.
\ee
\end{prop}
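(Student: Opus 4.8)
The plan is to push the eigenmode decomposition through the Fourier channel and let a discrete Dirichlet kernel perform the frequency selection. I would start from the completeness identity established inside the proof of Proposition~2, namely Eq.~\eqref{eqs:decomposition}, $O = 2^{-N}\sum_{b\in B}\Tr(O\Delta_b^\dag)\Delta_b$, and conjugate each basis operator by $U_F^n$. Iterating the first relation in Eq.~\eqref{eqs:eigenproblem} gives $U_F^{\dag n}\Delta_b U_F^n = e^{-i\omega_b n}\Delta_b$, so that after inserting this into the definition Eq.~\eqref{eqs:fourier_channel} and interchanging the two finite sums over $b$ and $n$ I obtain
\[
\mathcal F_\omega(O) = \lim_{D\to\infty}\lim_{N\to\infty}\frac 1{2^N}\sum_{b\in B}\Tr(O\Delta_b^\dag)\,K_D(\omega-\omega_b)\,\Delta_b,\qquad K_D(\delta):=\frac 1D\sum_{n=0}^{D-1}e^{i\delta n}.
\]
All the content of the proposition is then encoded in the limiting behaviour of the kernel $K_D$.

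Next I would analyze this geometric kernel. One has $K_D(0)=1$, while for $\delta\not\equiv 0\ (\mathrm{mod}\ 2\pi)$ the closed form of the geometric sum yields $|K_D(\delta)|\le 1/(D|\sin(\delta/2)|)$, which tends to $0$ as $D\to\infty$. Taking $N\to\infty$ first, as prescribed by the order of limits, is exactly what makes this dichotomy clean: by definition the modes $b\in B_\omega$ have $\omega_b\to\omega$, so their kernels converge to $K_D(0)=1$, whereas the modes $b\in\overline{B_\omega}$ have limiting frequencies separated from $\omega$ by a gap $g:=\inf_{b\in\overline{B_\omega}}|\omega-\omega_b|>0$, so their kernels are uniformly bounded by $1/(D\sin(g/2))$. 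Sending $D\to\infty$ afterwards keeps the resonant block $B_\omega$ intact and suppresses the off-resonant block $\overline{B_\omega}$, which reproduces exactly Eq.~\eqref{eqs:fourier_rep}.

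The step that needs care --- and which I expect to be the main obstacle --- is showing that the $\overline{B_\omega}$ block genuinely drops out, since it contains exponentially many modes and the pointwise kernel bound must be summed against all of them. The natural tool is Parseval in the Hilbert--Schmidt inner product: because $\{2^{-N/2}\Delta_b\}$ is orthonormal by the second line of Eq.~\eqref{eqs:eigenproblem}, the off-resonant piece has squared norm $\sum_{b\in\overline{B_\omega}}|\hat c_b|^2\,|K_D(\omega-\omega_b)|^2$ with $\hat c_b = 2^{-N/2}\Tr(O\Delta_b^\dag)$, and the uniform gap $g$ bounds this by $\|O\|_{\mathrm{HS}}^2/(D^2\sin^2(g/2))$. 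The delicate point is the interchange of the thermodynamic and long-time limits: the number of modes and the total spectral weight grow with $N$, so the estimate must be organized in the prescribed order ($N$ before $D$), and the cleanest honest statement is at the level of the matrix elements and coefficients $w_{\mu b}$ actually used downstream (through Eq.~\eqref{eqs:delta_gamma_connection} and the observables $F^s_\mu$) rather than as an operator-norm identity. Establishing the spectral gap $g>0$, i.e.\ that the $\overline{B_\omega}$ frequencies remain bounded away from $\omega$ in the limit, is the physical non-degeneracy input underlying the partition $B=B_\omega\cup\overline{B_\omega}$ and is what ultimately makes the selection rule rigorous.
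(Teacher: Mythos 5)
Your proposal follows essentially the same route as the paper's proof: decompose $O$ via Eq.~\eqref{eqs:decomposition}, pull the eigenvalue relation of Proposition~1 through the time average, and let the averaged exponential $\frac 1D\sum_{n=0}^{D-1}e^{i(\omega-\omega_b)n}$ select the resonant modes $B_\omega$. Your Dirichlet-kernel bound, Parseval estimate for the off-resonant block, and gap discussion are a more careful quantitative rendering of the step the paper compresses into $\lim_{D\to\infty}\frac 1D\sum_{n=0}^{D-1}e^{i\omega n}=\delta_{\omega,0}$, but the argument is the same.
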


\begin{proof}
Using the decomposition in Eq.~\eqref{eqs:decomposition}, we have
\be
\mathcal F_{\omega}(O) = \limsum e^{i\omega n}\sum_{b\in B} \frac 1{2^N}\Tr(O\Delta^\dag_b)U_F^{\dag n} \Delta_b U^{n}_F.
\ee
Now, using Proposition~\ref{prop1}, we have
\be
\begin{split}
\mathcal F_{\omega}(O) = \lim_{N,D\to\infty}\frac 1{2^ND}\sum_{n=0}^{D-1}\sum_{b\in B} e^{i(\omega-\omega_b) n}\Tr(O\Delta^\dag_b)\Delta_b.
 \end{split}
\ee
Using the property
\be
\lim_{D\to\infty}\frac 1D \sum_{n=0}^{D-1} e^{i\omega n} = \delta_{\omega,0},
\ee
we get the expression
\be
\mathcal F_{\omega}(O) =  \lim_{N\to\infty}\frac1{2^N}\sum_{b\in B} \delta_{\omega-\omega_b,0}\Tr(O\Delta^\dag_b)\Delta_b.
\ee
Next, using the definition of $B_\omega$, we obtain the statement of this Proposition.
\end{proof}

If the system is non-interacting, i.e. $\lambda=0$, the eigenmodes can be efficiently expressed in the free fermionic representation. The action of the Floquet unitary on the Majorana operators is then the linear map
\be\label{eqs:linear_transformation}
U^\dag_F\gamma_\mu U_F = \sum_{\nu=1}^{2N} u_{\mu\nu}\gamma_\nu,
\ee
where $u$ is a $2N\times 2N$ unitary matrix
\be\label{eqs:free_fermion_unitary}
u = \exp(4\theta h_{xx})\exp(4\phi h_z),
\ee
and $h_z$ and $h_{xx}$ are
\be
\begin{split}
& h_{z} = -\frac 12\sum_{k=0}^{N-1}\Bigl(|2k+1\>\<2k+2|-{\rm h.c.}\Bigl),\\
& h_{xx} = \frac 12\sum_{k=0}^{N-2}\Bigl(|2k+2\>\<2k+3|-{\rm h.c.}\Bigl).
\end{split}
\ee
Using the solution of the eigenproblem
\be\label{eqs:single_fermion_eigenproblem}
\psi^T_k u = e^{-i\omega^0_k} \psi^T_k,
\ee
we define the set of single-fermion modes
\be\label{eq:single_fermion_modes}
\Delta^0_k = \sum_{\mu=1}^{2N} \psi_{k\mu}\gamma_\mu, \quad \{\Delta^{0\dag}_k,\Delta^0_{k'}\} = 2\delta_{kk'}.
\ee
Since the unitary matrix in Eq.~\eqref{eqs:free_fermion_unitary} is real-valued, the complex conjugate for both sides in Eq.~\eqref{eqs:single_fermion_eigenproblem} gives us another, opposite-frequency solution
\be
(\psi^*_k)^Tu = e^{i\omega^0_k} (\psi^*_k)^T.
\ee
This conclusion proves that the spectrum of the problem is zero-symmetric, i.e. for each mode $k$ there exists an orthogonal mode $\sigma(k)$ such that $\psi_{\sigma(k)} = \psi_k^*$ and $\omega_{\sigma(k)}^0 = -\omega_k^0$. This mathematical property is a reflection of the physical particle-hole symmetry.

For the non-interacting case, each many-body eigenmode in Eq.~\eqref{eqs:eigenproblem} and its frequency can be expressed as
\be\label{eqs:composite_mode}
\Delta_b = \Delta^0_{m_1}\dots \Delta^0_{m_K}, \qquad
\omega_b = \sum_{i=1}^K\omega^0_{m_i}.
\ee
Each mode appears only once in the product, i.e. $m_i\neq m_j$, and the number of single-particle modes involved is $K\in \{1,\dots,2N\}$. A crucial advantage of the free fermion representation is its efficiency in evaluating the eigenmodes on a classical computer by diagonalizing the unitary operator in Eq.~\eqref{eqs:free_fermion_unitary}.

Finally, we define the sets of approximate modes.

\begin{dfn} 
An orthonormal operator set $\{\Delta'_b\}$ is called an $\epsilon$-approximate set of eigenmodes if it satisfies
\be\label{eqs:approx_eigenproblem}
\begin{split}
&U_F^\dag \Delta'_b U_F = e^{-i\omega_b}\Delta'_b+\delta O, \quad \|\delta O\|\leq \epsilon,\\
\end{split}
\ee
where $\|\cdot\|$ is the operator norm.
\end{dfn}

Approximate modes behave similarly to regular eigenmodes up to times $\tau \propto 1/\epsilon$.

 Let us consider the realistic case $D<\infty$ and in the presence of noise. In the experiment, we implement the map
\be
\tilde{\mathcal F}_{D,\omega}(\cdot) := \frac 1D\sum_{m=0}^{D-1} e^{i\omega n} \mathcal E^n_F (\cdot),
\ee
where $\mathcal E_F$ is a general map that includes unitary evolution and the effect of noise. The effect of this map can be approximated by
\be\label{eqs:approx_model}
\mathcal E_F (\Delta_b) \approx e^{i\omega_b -\Gamma_b }\Delta_b
\ee
where $\Gamma_b$ is the decay rate that includes both the effect of the noise and the finite lifetime of the approximate integral of motion $\Delta_b$. This effect is visible in Fig.~\ref{fig:fig2s}, especially in the presence of $ZZ$-gates. Then, the effect of noise and finite depth can be understood as
\be\label{eqs:realistic_map}
\tilde{\mathcal F}_{D,\omega}(O) \approx  \sum_{b\in B}f_D(\omega-\omega_b, \Gamma_b)\Tr(O\Delta^\dag_b)\Delta_b.
\ee
where the coefficients are
\be
f_D(\omega,\Gamma) = \frac 1D \frac{1-e^{i\omega D-\Gamma D}}{1-e^{i\omega-\Gamma}}.
\ee
This function generates the peak broadening and effective attenuation factor as $|f_D(\omega, \Gamma)|\leq 1$.

\section{\sectionthree: Majorana eigenmodes} 

The topological Majorana modes $\Gamma_s$, $s\in\{L,R\}$, satisfy
\be\label{eqs:integrals of motion}
U^\dag_F \Gamma_s U_F= e^{-i\omega_M}\Gamma_s+O(1/\tau),
\ee
where $\omega_M$ are the Majorana mode frequencies, $\omega_M \in\{0,\pi\}$ in the limit $N\to\infty$, and $\tau$ is the lifetime. The Majorana modes are commonly described as ``strong'' if $\tau\to\infty$, while the case $\tau<\infty$ corresponds to ``weak'' modes.

For the topological Majorana modes, both strong and weak, we use the following form
\be\label{eqs:Gtog}
\Gamma_s = \sum_{\mu=1}^{2N} \psi^s_\mu \gamma_\mu,
\ee
 where $\gamma_\mu$ are physical Majorana operators and $\psi^s_\mu$ are real Majorana wave functions. In the non-interacting topological phase, $\lambda=0$, there always exists a pair of solutions for the unitary in Eq.~\eqref{eqs:free_fermion_unitary} which satisfies
 \be\label{eq:eigenproblem}
\sum_{\mu=1}^{2N}u_{\mu\nu}\psi^s_\mu = \pm\psi^s_\nu
\ee
corresponding to MZMs (plus sign) and MPMs (minus sign). Therefore, non-interacting regimes are always characterized by \textit{strong} Majorana modes. On the other hand, in the interacting regime, i.e. $|\lambda|>0$, the existence of such operators is not guaranteed. However, for certain values of the angle $(\theta,\phi)$ such modes are weak eigenmodes characterized by large $\tau\propto \exp(c/\lambda)\gg1$ \cite{shtanko2020unitary}. Even in the ideal case where we ignore the noise in the actual experiment, as will become clear in Eq.~\eqref{eqs:majorana_fourier_e}, the lifetime $\tau$ limits the maximum depth $D$ such that $D/\tau\to0$. In practice, this connection means that the Fourier components evaluated in Eq.~\eqref{eq:fourier_components} must be compared with the experimental results obtained for $\tau\gg D\gg 1$.

For both strong and weak modes, the wavefunctions $\psi^s_\mu$ must be properly normalized, as follows from the condition $\Gamma_s^2 = 1$. Indeed,
\be
\Gamma_s^2 = \sum_{\nu,\nu'=1}^{2N} \psi^s_{\nu}\psi^s_{\nu'}\gamma_\nu\gamma_{\nu'} = \sum_{\nu=1}^{2N} (\psi^{s}_{\nu})^2 = 1,
\ee
where we used the statistics of the Majorana operators, $\{\gamma_\mu,\gamma_\nu\} = 2\delta_{\mu\nu}$. 

Next, we analyze how to detect Majorana modes using the Fourier components in Eq.~\eqref{eq:fourier_components} generated from the experiment. To do this, we express these components using the Fourier map in Eq.~\eqref{eqs:fourier_channel} as
\be\label{eqs:fourier_majorana_expression}
\begin{split}
F^s_\mu(0) := \limsum \<\psi_0|&U_F^{\dag n}
\gamma_\mu U^{n}_F|\psi_0\>\\
& =\<\psi_0|\mathcal F_0(\gamma_\mu)|\psi_0\>.
\end{split}
\ee
In the following section we will show how to distinguish true MZM modes from other trivial edge oscillations. For now, let us focus on zero-frequency MZMs and assume that they are the only pair of zero-frequency modes that have a non-zero overlap with the Majorana operators $\gamma_\mu$.  Then, from Proposition 3, this action is given by
\be
\begin{split}
\mathcal F_{0}(\gamma_\mu) &= \sum_{s\in\{L,R\}}\psi^s_\mu\limsum  U_F^{\dag n}
\Gamma_s U^{n}_F\\
&\quad +\limsum\frac{1}{2^N}\sum_{b\in \overline{B_0}}w^*_{b\mu}e^{-i\omega_b n}\Delta_b.
\end{split}
\ee
where $\overline{B_0}$ is the complementary to zero frequency subspace $B_0$.
Under the limit over $D$ the last term of this expression vanishes. At the same time, the first term can be simplified with Eq.~\eqref{eqs:integrals of motion} as
\be\label{eqs:majorana_fourier_e}
\limsum  U_F^{\dag n}
\Gamma_s U^{n}_F = \limsum  (\Gamma_s+O(n/\tau))
\ee
Assuming that $\lim_{D\to\infty} D/\tau = 0$, we get
\be\label{eqs:fourier}
\begin{split}
\mathcal F_{0}(\gamma_\mu)=\lim_{N\to\infty} \sum_{s\in\{L,R\}}\psi^s_\mu\Gamma_s\equiv \lim_{N\to\infty}(\psi^L_\mu\Gamma_L+\psi^R_\mu\Gamma_R),
\end{split}
\ee
Using Eqs.~\eqref{eqs:fourier} and \eqref{eqs:Gtog}, we transform Eq.~\eqref{eqs:fourier_majorana_expression} into
\be
F^\alpha_\mu(0) = \lim_{N\to\infty}\sum_{s\in \{L,R\}}\sum_{\mu'=1}^{2N} \psi^s_{\mu} \psi^s_{ \mu'}\<\psi_0|\gamma^\alpha_{\mu'}|\psi_0\>.
\ee
where we restored the Pauli representation index for Majorana operators. 
Using $\<\psi_0|\gamma^L_\mu|\psi_0\> = \delta_{\mu,1}$ and $\<\psi_0|\gamma^R_\mu|\psi_0\> = \delta_{\mu,2N}$,
 we rewrite
\be
\begin{split}
&F_\mu^L(0) = \lim_{N\to\infty}(\psi^L_{\mu}\psi^L_1+\psi^R_{\mu}\psi^R_1),\\
&F^R_\mu(0) = \lim_{N\to\infty}(\psi^L_{\mu}\psi^L_{2N}+\psi^R_{\mu}\psi^R_{2N}).
\end{split}
\ee
Taking into account that the eigenmodes are exponentially suppressed away from the respective boundaries of the system, i.e. $\psi^R_1 \sim \psi^L_{2N} \sim 2^{-\Theta(N)}$, we obtain the expressions
\be
\begin{split}
&F_\mu^L(0) = \lim_{N\to\infty}\psi^L_{\mu}\psi^L_1,\\
&F^R_\mu(0) = \lim_{N\to\infty}\psi^R_{\mu}\psi^R_{2N},
\end{split}
\ee
from which we derive
\be\label{eqs:wavefunctions_F}
\begin{split}
&\psi^L_{\mu} = F^L_\mu(0)/\sqrt{F^L_1(0)},\\
&\psi^R_{\mu} = F^R_\mu(0)/\sqrt{F^R_{2N}(0)}.
\end{split}
\ee
The evaluation can be done in a similar way for MPM.
The final expression is the same as Eq.~\eqref{eq:wavefnct_extract} in the main text.

In the case of finite circuit depth and in the presence of noise, we use Eq.~\eqref{eqs:realistic_map} to modify Eq.~\eqref{eqs:wavefunctions_F} as
\be\label{eqs:wvafunction_normalization}
\psi^{L,R}_{\mu} \to \sqrt{|f(\omega,\Gamma)|}\,\psi^{L,R}_{\mu},
\ee
where $\omega\sim 2^{-O(N)}$ is the frequency of the Majorana mode for a system of finite size and $\Gamma$ is the effective decay rate. Since $|f(\omega,\Gamma)|\leq1$, the approximation given by Eq.~\eqref{eqs:approx_model} leads to a uniform damping of the wavefunction as seen in Eq.~\eqref{eqs:wvafunction_normalization}. Normalizing the wavefunction eliminates this effect. However, more general noise models would result in more complex effects.

\section{\sectionfour: Two-point correlation function}

In this section we will assume a non-interacting scenario where $\lambda=0$. For simplicity, we will focus on zero-frequency modes, although $\pi$-frequency modes can be treated similarly. First, we will establish a connection between the two-point function in Eq.~\eqref{eq:t_observable} and an expectation of the Fourier map given by
\be
T_{\mu\nu} = \<\tilde\psi_0|\mathcal F_0(\gamma_\mu\gamma_\nu)|\tilde\psi_0\>
,\ee
where we have chosen the initial product state in the form of the product state $|\tilde\psi_0\> = |\psi_a\>|s_2\>\dots|s_{N-1}\>|\psi_a\>$, where $|\psi_a\> = \cos a|0\>+i\sin a|1\>$, and $|s_i\>$ are arbitrary states in the $Z$ basis, $s_i\in\{0,1\}$.
Next, we express the Majorana operators using free fermion eigenmodes as
\be
\gamma_\mu = \sum_{k=1}^{2N}\psi^*_{k\mu}\Delta^0_k,
\ee
where the wavefunctions $\psi_{k\mu}$ are defined in Eq.~\eqref{eqs:single_fermion_eigenproblem}.
This decomposition allows us to write
\be\label{eqs:T_operator}
\begin{split}
\mathcal F_{0}&(\gamma_\mu\gamma_\nu)=\\
&= \limsum\sum_{k,k'=1}^{2N}\psi^*_{k\mu}\psi^*_{k'\nu} e^{-i(\omega^0_{k}+\omega^0_{k'}) n}\Delta^0_k\Delta^0_{k'}\\
& = \lim_{N\to\infty}(A_{\mu\nu}+B_{\mu\nu}),
\end{split}
\ee
where the variable $A_{\mu\nu}$ denotes the contribution from the set of zero-frequency modes ($B_0$), while $B_{\mu\nu}$ represents the contribution from pairs of modes with opposite frequencies from the complementary set $\overline{B_0}$,
\be
\begin{split}
&A_{\mu\nu} = \sum_{k,k'\in B_0}\psi_{k\mu}\psi_{k'\nu}\Delta^0_k\Delta^0_{k'},\\
&B_{\mu\nu} = \sum_{k\in \overline{B_0}}\psi^*_{k\mu}\psi^*_{\sigma(k)\nu}\Delta^0_k\Delta^0_{\sigma(k)}\\
& \qquad \qquad =  \sum_{k\in\overline{B_0}}\psi^*_{k\mu}\psi_{k\nu}\Delta^0_k\Delta^{0\dag}_k
\end{split}
\ee
The remaining terms vanish in the limit $D\to\infty$.
Here $\sigma(k)$ represents the opposite frequency mode with respect to mode $k$. We also take advantage of the fact that we can always choose the zero-frequency modes $k\in B_0$ to be real-valued, $\psi_{k\mu} = \psi^*_{k\mu}$, while the remaining modes satisfy the relations $\psi_{\sigma(k)\mu} = \psi^*_{k\mu}$, $\omega_{\sigma(k)}^0 = -\omega_k^0$, and $\Delta^0_{\sigma(k)} = \Delta^{0\dag}_k$. 

Using these notations, the target two-point correlation function can be expressed as
\be
T_{\mu\nu} = \lim_{N\to\infty}\Bigl(\<\tilde\psi_0|A_{\mu\nu}|\tilde\psi_0\>+\<\tilde\psi_0|B_{\mu\nu}|\tilde\psi_0\>\Bigl).
\ee
Next, we use Eq.~\eqref{eqs:delta_gamma_connection} to express the expectation value
\be
\begin{split}
&\<\tilde\psi_0|\Delta^0_k\Delta^0_{k'}|\tilde\psi_0\> = \sum_{\mu,\mu'=1}^{2N}\psi_{k\mu}\psi_{k'\mu'}\<\tilde\psi_0|\gamma_\mu\gamma_{\mu'}|\tilde\psi_0\>,\\
&\<\tilde\psi_0|\Delta^0_k\Delta^{0\dag}_{k'}|\tilde\psi_0\> = \sum_{\mu,\mu'=1}^{2N}\psi_{k\mu}\psi^*_{k'\mu'}\<\tilde\psi_0|\gamma_\mu\gamma_{\mu'}|\tilde\psi_0\>.
\end{split}
\ee
The expected values on the right-hand side of these equations can be expressed in terms of the chosen product state $|\tilde \psi_0\>$ as
\be
\begin{split}
\<\tilde\psi_0|\gamma_\mu\gamma_{\mu'}|\tilde\psi_0\> = \delta_{\mu\mu'}&+iC_1(\delta_{\mu1}\delta_{\mu'2N}-\delta_{\mu2N}\delta_{\mu'1})\\
&+iC_2(\delta_{\mu1}\delta_{\mu'2}-\delta_{\mu2}\delta_{\mu'1})\\
&+iC_2(\delta_{\mu2N-1}\delta_{\mu'2N}-\delta_{\mu2N}\delta_{\mu'2N-1}),
\end{split}
\ee
where we used the notations
\be
\begin{split}
&C_1 = \<\psi_a|Y|\psi_a\>^2\prod_{i=2}^{N-1}\<s_i|(-Z_i)|s_i\> = \, (-1)^{N+S}\sin^2 (2a),\\
&C_2 = \<\psi_a|Z|\psi_a\> = \cos 2a.
\end{split}
\ee
where $S = \sum_{i=2}^{N-1}s_i$.
We use these expressions to obtain
\be
\begin{split}
\<\tilde\psi_0|\Delta^0_k\Delta^0_{k'}|\tilde\psi_0\> & = \sum_{\mu=1}^{2N}\psi_{k\mu}\psi_{k'\mu}+iC_1(\psi_{k1}\psi_{k'2N}-\psi_{k2N}\psi_{k'1})\\
&\qquad +iC_2(\psi_{k1}\psi_{k'2}-\psi_{k1}\psi_{k'2}\\
&\qquad\qquad +\psi_{k2N-1}\psi_{k'2N}-\psi_{k2N}\psi_{k'2N-1}).
\end{split}
\ee
and
\be
\begin{split}
\<\tilde\psi_0|\Delta^0_k\Delta^{0\dag}_{k'}|\tilde\psi_0\> & = \sum_{\mu=1}^{2N}\psi_{k\mu}\psi^*_{k'\mu}+iC_1(\psi_{k1}\psi^*_{k'2N}-\psi_{k2N}\psi^*_{k'1})\\
&\qquad +iC_2(\psi_{k1}\psi^*_{k'2}-\psi_{k1}\psi^*_{k'2}\\
&\qquad\qquad +\psi_{k2N-1}\psi^*_{k'2N}-\psi_{k2N}\psi^*_{k'2N-1}).
\end{split}
\ee
For real-valued zero frequency modes, we use the orthogonality condition
\be
 \forall k\in B_0:\qquad \sum_{\mu=1}^{2N}\psi_{k\mu}\psi_{k'\mu} = \delta_{kk'}.
\ee
As a result, the expression for the contribution of the zero frequency modes to the two-point correlation function can be written as
\be\label{eqs:a_part}
\begin{split}
\<\tilde\psi_0|&A_{\mu\nu}|\tilde\psi_0\> =  \sum_{k,k'\in B_0}\psi_{k\mu}\psi_{k'\nu}\<\tilde\psi_0|\Delta^0_k\Delta^0_{k'}|\tilde\psi_0\> \\
 = &  \sum_{k\in B_0} \psi_{k\mu}\psi_{k\nu}\\
& + iC_1\Bigl(\sum_{k\in B_0}\psi_{k\mu}\psi_{k1}\sum_{k'\in B_0}\psi_{k'\nu}\psi_{k'2N}-(\mu\leftrightarrow\nu)\Bigl)\\
& + iC_2\Bigl(\sum_{k\in B_0}\psi_{k\mu}\psi_{k1}\sum_{k'\in B_0}\psi_{k'\nu}\psi_{k'2}-(\mu\leftrightarrow\nu)\Bigl)\\
& + iC_2\Bigl(\sum_{k\in B_0}\psi_{k\mu}\psi_{k2N-1}\sum_{k'\in B_0}\psi_{k'\nu}\psi_{k'2N}-(\mu\leftrightarrow\nu)\Bigl),
\end{split}
\ee
where $(\mu\leftrightarrow\nu)$ is the same term as before with the indices $\mu$ and $\nu$ swapped.

Similarly, the second part has the form
\be\label{eqs:b_part}
\begin{split}
\<\tilde\psi_0|B_{\mu\nu}&|\tilde\psi_0\> = \sum_{k\in \overline{B_0}} \psi^*_{k\mu}\psi_{k\nu}\\
&+\Bigl( iC_1\sum_{k\in \overline{B_0}}\psi^*_{k\mu}\psi_{k1}\psi_{k\nu}\psi^*_{k2N}\\
& + iC_2\sum_{k\in \overline{B_0}}\psi^*_{k\mu}\psi_{k1}\psi_{k\nu}\psi^*_{k2}\\
& + iC_2\sum_{k\in \overline{B_0}}\psi^*_{k\mu}\psi_{k2N-1}\psi_{k\nu}\psi^*_{k2N}-(\mu\leftrightarrow\nu)\Bigl).
\end{split}
\ee
Since we assume that the bulk of the system is delocalized, the wavefunctions of single-fermion modes with non-zero frequency must satisfy $\psi_{k\mu}\propto O(N^{-1/2})$. This means that all terms in Eq.~\eqref{eqs:b_part} except the first have $O(N^{-1})$ scaling and therefore vanish in the limit $N\to\infty$. Therefore, combining the contributions in Eqs.~\eqref{eqs:a_part} and \eqref{eqs:b_part} we get
\be
\begin{split}
T_{\mu,\nu} =  \delta_{\mu\nu} & + \lim_{N\to\infty} \Bigl[iC_1\sum_{k\in B_0}\psi_{k\mu}\psi_{k1}\sum_{k'\in B_0}\psi_{k'\nu}\psi_{k'2N} \\
& + iC_2\sum_{k\in B_0}\psi_{k\mu}\psi_{k1}\sum_{k'\in B_0}\psi_{k'\nu}\psi_{k'2}
\\
& + iC_2\sum_{k\in B_0}\psi_{k\mu}\psi_{k2N-1}\sum_{k'\in B_0}\psi_{k'\nu}\psi_{k'2N}\\
& -(\mu\leftrightarrow\nu) \Bigl].\end{split}
\ee
Next, we analyze the behavior of this function for a few characteristic values of $\mu$ and $\nu$. First we consider the combination $\mu = 1$ and $\nu = 2$ and assume that the wavefunctions $\psi_{k\mu}$ for zero modes $k\in B_0$ are strongly localized, i.e. $\psi_{k\mu}\psi_{k\nu} \leq 2^{-c|\mu-\nu|}$, $c>0$. Then the correlation function takes the form
\be
T_{1,2} = iC_2\sum_{k\neq k'\in B_0}(\psi^2_{k1}\psi^2_{k'2}-\psi^2_{k2}\psi^2_{k'1}).
\ee
In particular, if the only modes are Majorana modes, then $\psi^L_{1}\psi^R_{2} \sim \psi^L_1\psi^R_2\sim O(2^{-\Theta(N)})$, and this quantity vanishes in the thermodynamic limit. In contrast, if there is a localized state at the boundary of the system, there exists a pair $b\neq b'$ for which $T_{1,2}= O(1)$.

At the same time, the correlation function for $\mu = 1$ and $\nu = 2N$ has the form
\be
T_{1,2N} = iC_1\sum_{b,b'\in B_0}(\psi^2_{k1}\psi^2_{k'2N}-\psi^2_{k2N}\psi^2_{k'1}).
\ee
This expression does not vanish for both Majorana and trivial modes. For other points $\mu$ and $\nu$ far from the boundaries, the two-point correlation function vanishes.

To model the trivial system with approximate zero-energy localized boundary eigenmodes, we consider the Hamiltonian
\be
H(t) = \sum_{j=1}^{N-1}J_j(t)\,X_j X_{j+1}+\sum_{j=1}^{N}h_j(t) Z_j ,
\ee
with the same protocol. The couplings and fields for the bulk qubits are the same, $J_jT = \pi/16$ for $j\neq 1,N-1$ and $h_jT = \pi/4$ for $j\neq 1,N$. At the same time, we set $J_1 = J_{N-1} = h_1 = h_{N} = 0$. Due to decoupling of the boundary qubits, Majorana operators corresponding to these qubits (i.e. $\gamma_1$, $\gamma_2$, $\gamma_{2N-1}$, and $\gamma_{2N}$) are integrals of motion.

\section{\sectionfive: Braiding map}

In this section, we provide a rigorous proof of the properties of the braiding map in Eq.~\eqref{eq:braiding_map} in the main text. First, we formulate Lemma 1, which establishes the action of the map on MZM operators.
\begin{lem} 
Suppose Eq.~\eqref{eqs:fourier} holds and $\psi^L_1 = \psi^R_{2N} = \xi$, $\xi^2\geq 1/2$. Then for the angle $\alpha_0 = \arcsin(1/\sqrt{2}\xi)$ the action of the map in Eq.~\eqref{eq:braiding_map} on MZM operators is
\be
\mathcal E_{\alpha_0}(\Gamma_R) = p\Gamma_L, \quad \mathcal E_{\alpha_0}(\Gamma_L) = -p\Gamma_R,\ee
where $p = \sqrt{2\xi^2-1}$.
\end{lem}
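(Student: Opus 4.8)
The plan is to reduce the braiding channel to a single application of the Fourier channel $\mathcal F_0$ from Eq.~\eqref{eqs:fourier_channel}, after pushing the boundary rotation through the Floquet evolution. First I would record the algebra of $W_\alpha := \exp(-\alpha\gamma_1\gamma_{2N})$. Since $\gamma_1\gamma_{2N}$ is anti-Hermitian and squares to $-I$, one has $W_\alpha = \cos\alpha - \sin\alpha\,\gamma_1\gamma_{2N}$. A direct conjugation, using $\{\gamma_1,\gamma_1\gamma_{2N}\}=\{\gamma_{2N},\gamma_1\gamma_{2N}\}=0$ and the fact that every other $\gamma_\mu$ commutes with $\gamma_1\gamma_{2N}$, shows that $W_\alpha$ implements a rotation by angle $2\alpha$ in the $(\gamma_1,\gamma_{2N})$ plane,
\begin{equation}
W_\alpha^\dagger\gamma_1 W_\alpha = \cos 2\alpha\,\gamma_1 - \sin 2\alpha\,\gamma_{2N},\qquad W_\alpha^\dagger\gamma_{2N} W_\alpha = \sin 2\alpha\,\gamma_1 + \cos 2\alpha\,\gamma_{2N},
\end{equation}
leaving all other $\gamma_\mu$ fixed.

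Next I would use that $\Gamma_s$ are conserved zero-frequency modes, $U_F^{\dagger n}\Gamma_s U_F^{n} = \Gamma_s$ from Eq.~\eqref{eqs:integrals of motion} with $\omega_M=0$, to collapse the conjugation by $U_{n\alpha} = U_F^{n}W_\alpha U_F^{n}$ into $U_{n\alpha}^\dagger \Gamma_s U_{n\alpha} = U_F^{\dagger n}\,(W_\alpha^\dagger \Gamma_s W_\alpha)\,U_F^{n}$. Averaging over $n$ then identifies the braiding map with the Fourier channel, $\mathcal E_\alpha(\Gamma_s) = \mathcal F_0(W_\alpha^\dagger \Gamma_s W_\alpha)$. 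Expanding $\Gamma_s = \sum_\mu \psi^s_\mu\gamma_\mu$ from Eq.~\eqref{eqs:Gtog}, applying the rotation only to the $\gamma_1,\gamma_{2N}$ components, invoking localization to drop the exponentially small cross-boundary weights $\psi^L_{2N},\psi^R_1\to 0$ and setting $\psi^L_1=\psi^R_{2N}=\xi$, yields
\begin{equation}
W_\alpha^\dagger\Gamma_L W_\alpha = \Gamma_L + (\cos 2\alpha-1)\,\xi\,\gamma_1 - \sin 2\alpha\,\xi\,\gamma_{2N},
\end{equation}
together with the analogous expression for $\Gamma_R$ (obtained by $\gamma_1\leftrightarrow\gamma_{2N}$ and flipping the sign of the $\sin 2\alpha$ term).

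I would then feed these into $\mathcal F_0$, using $\mathcal F_0(\Gamma_s)=\Gamma_s$ (each $\Gamma_s$ already lies in the zero-frequency subspace) together with $\mathcal F_0(\gamma_1)=\xi\Gamma_L$ and $\mathcal F_0(\gamma_{2N})=\xi\Gamma_R$, which follow from Eq.~\eqref{eqs:fourier} and the same localization. Collecting terms gives the $2\times2$ action
\begin{equation}
\mathcal E_\alpha(\Gamma_L) = [1+(\cos 2\alpha-1)\xi^2]\,\Gamma_L - \sin 2\alpha\,\xi^2\,\Gamma_R,
\end{equation}
and the mirror relation for $\mathcal E_\alpha(\Gamma_R)$. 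Requiring the diagonal coefficient to vanish fixes $\cos 2\alpha_0 = 1 - 1/\xi^2$, which is exactly $\alpha_0 = \arcsin(1/\sqrt{2}\xi)$; the hypothesis $\xi^2\geq 1/2$ guarantees $\alpha_0$ is real. A short trigonometric computation then gives $\sin 2\alpha_0\,\xi^2 = \sqrt{2\xi^2-1}=p$, completing the claim $\mathcal E_{\alpha_0}(\Gamma_R)=p\Gamma_L$ and $\mathcal E_{\alpha_0}(\Gamma_L)=-p\Gamma_R$. The main obstacle I expect is the careful handling of the localization approximation and the order of limits: dropping $\psi^L_{2N}$ and $\psi^R_1$ is only legitimate in the $N\to\infty$ limit taken before the $D\to\infty$ average, and one must confirm that the rotated boundary Majoranas do not generate zero-frequency content beyond the MZM pair, precisely the hypothesis that $\Gamma_L,\Gamma_R$ exhaust the zero-frequency overlap with $\gamma_1,\gamma_{2N}$ that is built into Eq.~\eqref{eqs:fourier}.
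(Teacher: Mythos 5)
Your proposal is correct and follows essentially the same route as the paper's proof of Lemma 1: expand $\exp(-\alpha\gamma_1\gamma_{2N})=\cos\alpha - \sin\alpha\,\gamma_1\gamma_{2N}$, collapse the conjugation by $U_{n\alpha}$ using $[U_F,\Gamma_s]=0$ so that $\mathcal E_\alpha(\Gamma_s)=\mathcal F_0(V^\dagger(\alpha)\Gamma_s V(\alpha))$, evaluate $\mathcal F_0(\gamma_1)$ and $\mathcal F_0(\gamma_{2N})$ via Eq.~\eqref{eqs:fourier}, discard the exponentially small cross-boundary weights $\psi^L_{2N},\psi^R_1$, and choose $\alpha_0$ to annihilate the diagonal coefficient, with $\sin 2\alpha_0\,\xi^2 = \sqrt{2\xi^2-1}=p$. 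The only cosmetic differences are that you present the boundary conjugation as an explicit rotation by $2\alpha$ in the $(\gamma_1,\gamma_{2N})$ plane (identical to the paper's form since $\cos 2\alpha - 1 = -2\sin^2\alpha$) and that you drop the cross-boundary terms before rather than after applying $\mathcal F_0$.
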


Lemma 1 leads to Eq.~\eqref{eq:exchange_map} in the main text. This lemma applies only to systems where MZMs are the only zero-frequency modes overlapping with single-fermion operators, as manifested by Eq.~\eqref{eqs:fourier}. At the same time, it applies to a generic setting including the interacting case $|\lambda|>0$.

Next, we formulate Lemma 2, which gives the expression for the map action on Majorana operators.

\begin{lem}
Under conditions of Lemma 1, the map in Eq.~\eqref{eq:braiding_map} satisfies
\be
\mathcal E_{\alpha_0}(\gamma_\mu)=p(\psi^R_\mu \Gamma_L-\psi^L_\mu \Gamma_R)+\delta C,
\ee
where the norm of the correction operator is bounded as
\be\label{correction}
\begin{split}
&\|\delta C\| \leq \frac 1{\xi^2}\max_\mu\sqrt{\sum_\nu\kappa_{\mu\nu}^2}, \\
&\kappa_{\mu\nu} = \lim_{N\to\infty}\sum_{k\in\overline{B_0}}  \psi^*_{k\mu}\psi^*_{k\nu}(\psi^2_{k1}+\psi^2_{k2N}),
\end{split}
\ee
where $\|\cdot\|$ is the spectral norm.
\end{lem}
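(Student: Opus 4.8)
The plan is to work entirely in the free-fermion representation of Section~2, since the lemma lives in the non-interacting regime $\lambda=0$, expanding each physical Majorana as $\gamma_\mu=\sum_k\psi^*_{k\mu}\Delta^0_k$ in single-fermion eigenmodes. The structural fact I would exploit is that $W:=\exp(-\alpha\gamma_1\gamma_{2N})$ is a Gaussian (quadratic) unitary, so conjugation by $W$ preserves the single-fermion space, $W^\dagger\Delta^0_kW=\sum_l M_{kl}\Delta^0_l$, and rotates only the two boundary operators while fixing all other $\gamma_\nu$. Recording $W^\dagger\gamma_1W=\cos(2\alpha)\gamma_1-\sin(2\alpha)\gamma_{2N}$ and $W^\dagger\gamma_{2N}W=\sin(2\alpha)\gamma_1+\cos(2\alpha)\gamma_{2N}$, and re-expanding $\gamma_1,\gamma_{2N}$ back into eigenmodes, gives the explicit transfer matrix $M_{kl}=\delta_{kl}+(\cos2\alpha-1)(\psi_{k1}\psi^*_{l1}+\psi_{k,2N}\psi^*_{l,2N})+\sin(2\alpha)(\psi_{k,2N}\psi^*_{l1}-\psi_{k1}\psi^*_{l,2N})$.

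Next I would unfold the two layers of Floquet evolution in $U_{n\alpha}=U_F^{n}WU_F^{n}$. Using $U_F^{\dagger n}\Delta^0_kU_F^{n}=e^{-i\omega^0_kn}\Delta^0_k$, the inner evolution attaches a phase $e^{-i\omega^0_kn}$ and the outer evolution a phase $e^{-i\omega^0_ln}$, so that before averaging $U_{n\alpha}^\dagger\gamma_\mu U_{n\alpha}=\sum_{k,l}\psi^*_{k\mu}M_{kl}\,e^{-i(\omega^0_k+\omega^0_l)n}\Delta^0_l$. Applying $\limsum$ (respecting the order $N\to\infty$ then $D\to\infty$) enforces $\omega^0_k+\omega^0_l=0$ through the orthogonality relation underlying Proposition~3, collapsing the double sum to $l=\sigma(k)$, together with $l=k$ for zero modes. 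I would then split the surviving sum into the zero-frequency block $k,l\in B_0$, which yields the leading term, and the complementary block $k\in\overline{B_0}$, $l=\sigma(k)$, which furnishes $\delta C$.

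For the leading block I would insert the hypotheses $\psi^L_1=\psi^R_{2N}=\xi$ and the exponential localization $\psi^R_1,\psi^L_{2N}=O(2^{-\Theta(N)})$, so that only $\Gamma_L,\Gamma_R$ remain in $B_0$. Substituting $\alpha_0$, for which $\cos2\alpha_0-1=-1/\xi^2$ and $\sin2\alpha_0=p/\xi^2$ with $p=\sqrt{2\xi^2-1}$, the diagonal entries cancel exactly, $1+\xi^2(\cos2\alpha_0-1)=0$, while the off-diagonal rotation produces precisely $p(\psi^R_\mu\Gamma_L-\psi^L_\mu\Gamma_R)$; this is the Majorana-resolved refinement of Lemma~1, which it reproduces upon contracting with $\Gamma_{R}$ or $\Gamma_{L}$ and using normalization and orthogonality of the wavefunctions. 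For the complementary block the identity term drops because $\sigma(k)\neq k$ when $\omega^0_k\neq0$, and invoking particle-hole symmetry $\psi_{\sigma(k)\nu}=\psi^*_{k\nu}$ the key simplification is that the sine contributions cancel, leaving $M_{k,\sigma(k)}=(\cos2\alpha_0-1)(\psi_{k1}^2+\psi_{k,2N}^2)=-\xi^{-2}(\psi_{k1}^2+\psi_{k,2N}^2)$. Re-expanding $\Delta^0_{\sigma(k)}=\sum_\nu\psi^*_{k\nu}\gamma_\nu$ then gives $\delta C=-\xi^{-2}\sum_\nu\kappa_{\mu\nu}\gamma_\nu$ with exactly the $\kappa_{\mu\nu}$ of the statement.

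The final step, which I expect to be the main obstacle, is the norm bound. The tight estimate $\|\delta C\|\le\xi^{-2}\max_\mu\sqrt{\sum_\nu\kappa_{\mu\nu}^2}$ requires $\kappa_{\mu\nu}$ to be real, which is not apparent term by term. I would establish it by pairing each $k\in\overline{B_0}$ with its opposite-frequency partner $\sigma(k)\in\overline{B_0}$: the $\sigma(k)$ summand is the complex conjugate of the $k$ summand, so the imaginary parts cancel and $\kappa_{\mu\nu}\in\mathbb{R}$. Since a real Majorana-linear operator obeys $\bigl(\sum_\nu a_\nu\gamma_\nu\bigr)^2=\bigl(\sum_\nu a_\nu^2\bigr)\mathbb{1}$, its spectral norm is exactly $\sqrt{\sum_\nu a_\nu^2}$, and taking the supremum over the input index $\mu$ delivers the stated uniform bound. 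The genuinely delicate bookkeeping lies in justifying the interchange of the $N\to\infty$ and $D\to\infty$ limits with the eigenmode sums and in controlling the subleading localization tails uniformly in $N$.
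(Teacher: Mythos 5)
Your proposal is correct and follows essentially the same route as the paper's proof: the same conjugation identity for $\exp(-\alpha_0\gamma_1\gamma_{2N})$, the same frequency selection by Floquet averaging (the paper phrases it through the Fourier channels $\mathcal F_{-\omega^0_k}$, you through the transfer matrix $M_{kl}$ collapsing to $l=\sigma(k)$), the same particle-hole cancellation of the $\sin 2\alpha_0$ terms yielding $M_{k,\sigma(k)}=-\xi^{-2}(\psi_{k1}^2+\psi_{k,2N}^2)$, and the same reality-of-$\kappa_{\mu\nu}$ plus Majorana-algebra argument $\bigl(\sum_\nu a_\nu\gamma_\nu\bigr)^2=\bigl(\sum_\nu a_\nu^2\bigr)I$ for the spectral norm. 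The only organizational difference is that you rederive the action on $\Gamma_{L,R}$ inline from the $2\times 2$ zero-mode rotation block, whereas the paper simply invokes Lemma 1 for that sector.
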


This conclusion leads us to Eq.~\eqref{eq:map_phys_majorana} when the bulk modes are delocalized. Indeed, in this case $\kappa_{\mu\nu}=O(N^{-1})$, therefore the absolute value in Eq.~\eqref{correction} scales as $O(N^{-1/2})$ and the correction vanishes in the thermodynamic limit $N\to\infty$.

As an alternative to the theoretical prediction in Lemma 1, we explore the possibility of finding the correct angle $\alpha_0$ by optimization. To illustrate this method, we run the circuit for multiple angles and find the optimal value of $\alpha_0$. To estimate the braiding quality, we propose a cost function that favors a braided wavefunction if it is located at the opposite boundary of the chain. In particular, for braiding the left eigenmode, our cost function is given by
\be\label{eq:cost_function}
\begin{split}
\mathcal L(\alpha_0) = \sum_{x=1}^N  \Bigl(|\tilde\psi^L_{2x-1}|^2&+|\tilde\psi^L_{2x}|^2\Bigl)(N-x)^2,
\end{split}
\ee
where $\tilde\psi^{L,R}_\mu$ are the braided wavefunctions in Eq.~\eqref{eq:braided_wavefunction} for the angle $\alpha_0$.
 The cost functions for different runs on the same device are shown in Fig.~\ref{fig:fig3s} with the same parameters as in Fig.~\eqref{fig:fig3}.
We use a simple polynomial approximation to find the optimal angle $\alpha_0$ corresponding to the minimum of the approximation of the curve.
In our experiment, the optimal value ($0.256934\pi$) differs slightly from the theoretical value ($0.263127\pi$). This difference is due to the presence of disorder and noise, which modify the original Hamiltonian dynamics. 

\begin{figure}[t!]
    \centering
    \includegraphics[width=0.5\textwidth]{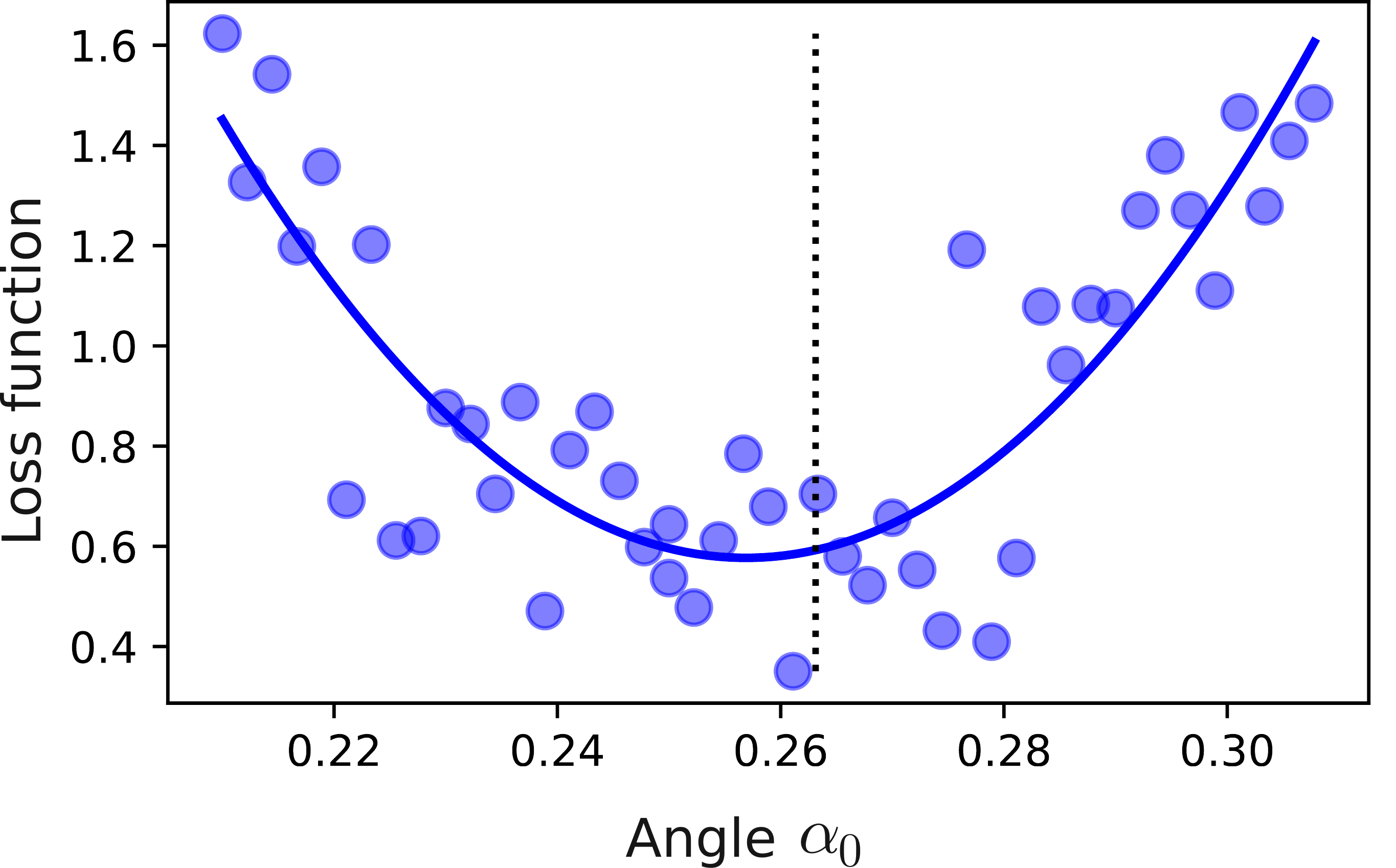}
    \caption{\textbf{Finding the angle by optimization}. Cost function in Eq.~\eqref{eq:cost_function} derived from experimental data (dots) approximated by a polynomial $I(\alpha_0)=a_1\alpha_0^2+a_2\alpha_0+a_3$ (solid curve), where $a_i$ are coefficients. The dotted vertical line shows the theoretically predicted value of the angle $\alpha_0$. We use a 5-qubit system on the $ibm\_hanoi$ device with parameters $\phi =\pi/16$, $\theta=\pi/4$, and $\varphi=0$ and maximum number of cycles $D=11$.}
    \label{fig:fig3s}
\end{figure}

Below we provide the proofs for both Lemma 1 and Lemma 2.\\

\textbf{Proof of Lemma 1}.
As a first step, we use the commutativity $[U_F,\Gamma_s] = 0$ for MZM operators and the decomposition in Eq.~\eqref{eqs:Gtog} to rewrite
\be\label{eqs:lem1_map}
\begin{split}
\mathcal E_{\alpha}(\Gamma_s) &= \limsum U^{\dag n}_FV^\dag(\alpha)\Gamma_sV(\alpha)U^n_F\\
& = \limsum \sum_{\mu=1}^{2N}\psi^s_\mu U^{\dag n}_F V^\dag(\alpha)\gamma_\mu V(\alpha)U^n_F.
\end{split}
\ee
where we define $V(\alpha) := \exp(-\alpha\gamma_1\gamma_{2N})$.
Now we can use the decomposition
$
V(\alpha) = \cos\alpha\, I - \sin\alpha\, \gamma_{1}\gamma_{2N}
$
and commutation relation between Majorana operators to express 
\be\label{eqs:gamma_map}
\begin{split}
V^\dag(\alpha)\gamma_\mu V(\alpha) = \gamma_\mu &- \delta_{\mu 1}(2\sin^2\alpha \gamma_1+\sin2\alpha \gamma_{2N})\\
&- \delta_{\mu 2N}(2\sin^2\alpha \gamma_{2N}-\sin2\alpha \gamma_{1}).
\end{split}
\ee
Combining this result with Eq.~\eqref{eqs:lem1_map}, we get
\be
\begin{split}
\mathcal E_{\alpha}(\Gamma_s) = \Gamma_s &-( 2\psi^s_1\sin^2\alpha-\psi^s_{2N}\sin2\alpha)\mathcal F_0(\gamma_1)\\
&-(2\psi^s_{2N}\sin^2\alpha+\psi^s_{1}\sin2\alpha)\mathcal F_0(\gamma_{2N}),
\end{split}
\ee
where we use the notation for the Fourier map from Eq.~\eqref{eqs:fourier_channel}.
Then, using the property in Eq.~\eqref{eqs:fourier}, we get the expression
\be
\begin{split}
\mathcal E_{\alpha}(\Gamma_s) = \Gamma_s&-\lim_{N\to\infty}\psi^L_1( 2\psi^s_1\sin^2\alpha-\psi^s_{2N}\sin2\alpha)\Gamma_L\\
&-\lim_{N\to\infty}\psi^R_{2N}(2\psi^s_{2N}\sin^2\alpha+\psi^s_{1}\sin2\alpha)\Gamma_R.
\end{split}
\ee
Inserting the particular values $s \in\{L,R\}$ and taking into account that $\psi^L_{2N}\sim \psi^R_1 \sim O(2^{-\Theta(N)})$ vanish as $N\to 0$, we get 
\be\label{eqs:lem1_almost_statement}
\begin{split}
&\mathcal E_{\alpha}(\Gamma_R) = (1-2(\psi^R_{2N})^2\sin^2\alpha)\Gamma_R+\sin2\alpha\, \psi^L_1\psi^R_{2N}\Gamma_L,\\
&\mathcal E_{\alpha}(\Gamma_L) = (1-2(\psi^L_{1})^2\sin^2\alpha)\Gamma_L-\sin2\alpha\, \psi^L_1\psi^R_{2N}\Gamma_R.\end{split}
\ee
By putting $\psi^L_1 = \psi^R_{2N} = \xi$, according to the Lemma's conditions, and choosing the angle
\be
\alpha\to \alpha_0 = \arcsin\frac{1}{\sqrt{2}\xi},
\ee
expression in Eq.~\eqref{eqs:lem1_almost_statement} converts into the statement of the Lemma. This step concludes our proof.\\

\textbf{Proof of Lemma 2}. Assuming that Majorana modes are the only single-fermion modes with zero frequency, we can write
\be
\gamma_\mu = \psi^L_\mu\Gamma_L+\psi^R_\mu\Gamma_R+\sum_{k\in \overline{B_0}}\psi^*_{k\mu}\Delta^0_k,
\ee
where $\overline{B_0}$ is the set of single-fermion modes in Eq.~\eqref{eq:single_fermion_modes} whose frequencies are distinct from zero. Then, the action of the target map on Majorana operator is
\be\label{eqs:lem2_map_expr}
\begin{split}
\mathcal E_{\alpha_0}(\gamma_\mu)& = \psi^L_\mu \mathcal E_{\alpha_0}(\Gamma_L)+\psi^R_\mu \mathcal E_{\alpha_0}(\Gamma_R)+\sum_{k\in \overline{B_0}}\psi^*_{k\mu}\mathcal E_{\alpha_0}(\Delta^0_k)\\
&= p(\psi^R_\mu \Gamma_L-\psi^L_\mu  \Gamma_R)+\sum_{k\in \overline{B_0}}\psi^*_{k\mu}\mathcal E_{\alpha_0}(\Delta^0_k),
\end{split}
\ee
where we used Lemma 1 to express the action of the target map on topological Majorana operators $\Gamma_{L,R}$.
The last term, in turn, can be evaluated using the explicit form of the map and Eq.~\eqref{eq:single_fermion_modes},
\be
\begin{split}
\mathcal E_\alpha&(\Delta_k^0) = \limsum e^{-i\omega^0_kn} U_F^{\dag n} V^\dag(\alpha)\Delta^0_kV(\alpha)U_F^n\\
&= \limsum \sum_{\nu=1}^{2N} \psi_{k\nu}e^{-i\omega^0_kn} U^{\dag n}_F V^\dag(\alpha)\gamma_\nu V(\alpha)U^n_F.
\end{split}
\ee
Using the result in Eq.~\eqref{eqs:gamma_map}, we get
\be\label{eqs:map_delta}
\begin{split}
\mathcal E_\alpha(\Delta_k^0) =& \limsum e^{-2i\omega^0_k n}\Delta^0_k\\
&-( 2\psi_{k1}\sin^2\alpha-\psi_{k2N}\sin2\alpha)\mathcal F_{-\omega^0_k}(\gamma_1)\\
&-(2\psi_{k2N}\sin^2\alpha+\psi_{k1}\sin2\alpha)\mathcal F_{-\omega^0_k}(\gamma_{2N}),
\end{split}
\ee
where we use the notation for the Fourier map from Eq.~\eqref{eq:fourier_components}. The first term of this expression vanishes for the set $k\in \overline{B_0}$. In turn, the action of the Fourier map is
\be
\begin{split}
\mathcal F_{-\omega_k}(\gamma_\mu) &= \sum_{k'\in\overline{B_0}}\psi^*_{k'\mu}\Delta^0_{k'}\limsum e^{-i(\omega^0_k+\omega^0_{k'})n}\\
& = \psi_{k\mu}\Delta_k^{0\dag},
\end{split}
\ee
where we used the fact that the wave-functions corresponding to opposite frequencies, i.e. $\omega^0_{k'} = -\omega^0_k$, satisfy $\psi_{k'\mu} = \psi_{k\mu}^*$ and $\Delta^0_{k'} = \Delta_k^{0\dag}$. Then, Eq.~\eqref{eqs:map_delta} takes the form
\be
\begin{split}
\mathcal E_{\alpha_0}(\Delta^0_k) 
&= -2\sin^2\alpha_0 (\psi_{k1}^2+\psi_{k2N}^2)\Delta_k^{0\dag} \\
&= -\frac 1{\xi^2}(\psi_{k1}^2+\psi_{k2N}^2)\Delta_k^{0\dag}.
\end{split}
\ee
Inserting this expression into Eq.~\eqref{eqs:lem2_map_expr}, we finally get
\be\label{eqs:E-rest}
\begin{split}
\mathcal E_{\alpha_0}(\gamma_\mu)
&=p(\psi^R_\mu \Gamma_L-\psi^L_\mu \Gamma_R)\\
&\quad-\frac{1}{\xi^2}\sum_{k\in\overline{B_0}}  \psi^*_{k\mu}(\psi_{k1}^2+\psi_{k2N}^2)\Delta_k^{0\dag}\\
& = p(\psi^R_\mu \Gamma_L-\psi^L_\mu \Gamma_R)\\
&\quad-\frac{1}{\xi^2}\sum_{k\in\overline{B_0}} \sum_{\nu=1}^{2N} \psi^*_{k\mu}\psi^*_{k\nu}(\psi_{k1}^2+\psi_{k2N}^2)\gamma_\nu\\
& = -\frac{1}{\xi^2}\sum_\nu \kappa_{\mu\nu}\gamma_\nu,
\end{split}
\ee
where $\kappa_{\mu\nu}$ are real numbers due to the fact that $\psi_{k\mu} = \psi^*_{k'\mu}$ for symmetric pairs $\omega_k^0 = -\omega_{k'}^0$ in $\overline{B_0}$. The maximum singular value of the operator in Eq.~\eqref{eqs:E-rest} is $\Lambda^{\rm max}_\mu =\xi^{-2}\sqrt{\sum_\nu\kappa_{\mu\nu}^2}$. This result leads us to the statement of the Lemma.

\end{document}